\documentclass[a4paper]{article}

\usepackage{a4wide}

\usepackage{amsmath}
\usepackage{graphicx}
\usepackage{amsfonts}
\usepackage{amssymb}
\usepackage{bm}
\usepackage{algorithm}
\usepackage{algorithmic}
\usepackage[usenames]{color}
\usepackage{amsthm}
\usepackage{booktabs}   
\usepackage{hyperref}

\RequirePackage{natbib}

\theoremstyle{plain}
\newtheorem{theorem}{Theorem}

\newcommand{\fbar}{{\bar{f}}}
\newcommand{\fh}{{\hat{f}}}
\newcommand{\ft}{{\tilde{f}}}
\newcommand{\xt}{{\tilde{x}}}
\newcommand{\ud}{\, \mathrm{d}}

\newcommand{\Cbar}{{\bar C}} 

\newcommand{\Ccal}{{\mathcal C}} 
\newcommand{\Ical}{{\mathcal I}} 
\newcommand{\Rbb}{{\mathbb R}} 
\newcommand{\Real}{{\mathbb R}} 
\renewcommand{\emptyset}{\varnothing} 
\newcommand{\norm}[1]{\lVert#1\rVert}
\newcommand{\argmin}{\mathop{\mathrm{arg\,min}}}
\newcommand{\argmax}{\mathop{\mathrm{arg\,max}}}
\newcommand{\Gh}{{\widehat{G}}}
\newcommand{\Bh}{{\widehat{B}}}

\title{Merging Modal Clusters via Significance Assessment}

\author{
  Yong Wang\thanks{Department of Statistics, The University of Auckland, New Zealand, email: yongwang@auckland.ac.nz}
  \and
  Shengwei Hu\thanks{Department of Statistics, The University of Auckland, New Zealand, email: shu454@aucklanduni.ac.nz}
}

\date{September 2, 2026}

\begin{document}

\maketitle

\begin{abstract}
  To deal with superfluous clusters and to reduce the number
  of clusters as often desired in practice, a modal cluster merging
  procedure is proposed. Based on some new properties established in
  this paper for Morse functions, the procedure merges clusters in a
  sequential manner without causing unnecessary density
  distortion. Each cluster is evaluated for its significance relative
  to the other clusters, using the Kullback-Leibler divergence or its
  log-likelihood approximation, by truncating the density for the
  cluster at an appropriate level. The least significant cluster is
  then merged into one of its adjacent clusters, using the novel
  concept of cluster adjacency defined in this paper. The resulting
  hierarchical clustering tree is useful for determining the number of
  clusters, as may be preferred by a specific user or in a general,
  meaningful manner. Numerical studies show that the new procedure
  deals well with difficult clustering problems and often produces
  intuitively appealing and numerically more accurate clustering
  results, as compared with several other popular clustering methods
  in the literature.
\end{abstract}

\noindent\textbf{Keywords:} Modal clustering, cluster merging, cluster
significance, likelihood ratio test, Morse function, Hierarchical
clustering

\section{Introduction}
\label{sec:intro}

The mode-based or modal clustering approach has received increasing
research attention in recent years. Being density-based, it identifies
a cluster by the basin of attraction of a mode of a probability
density function. It has more robust performance with data scaling and
outlying observations than traditional distance-based clustering
methods which measure similarity between objects based on some
distance metrics \citep{macqueen-1967,gower-ross-1969,defays-1977}. It
is also able to produce various yet intuitively appealing shapes of
clusters and is thus advantageous over model-based clustering which
uses mixture components, such as Gaussian distributions, to represent
clusters \citep{fraley-raftery-2002}. It also provides a clear
population target—namely, the basins of attraction of the true
density's modes—at which sample-based clustering algorithms should aim
\citep{chacon-2015}. 

Modal clustering methods have two major strands: level-set-based vs.\@
mode-seeking. The earliest idea of level-set-based clustering can date
back to \citet{hartigan-1975} which defines clusters as connected
high-density components separated by low-density boundary regions. By
varying the level which identifies the connected components, a
hierarchical structure of clusters known as cluster tree can therefore
be formed; see, e.g., \cite{cuevas-febrero-fraiman-2001},
\cite{azzalini-torelli-2007}, \cite{stuetzle-nugent-2010},
\cite{cadre-pelletier-pudlo-2013}, \cite{menardi-azzalini-2014} and
the references therein. The mode-seeking method, as its name suggests,
aims at identifying the modes of the underlying density function,
followed by associating each observation with a specific mode. Most of
the modal clustering methods that follow the mode-seeking strand rely
on the mean-shift algorithm proposed by \citet{fukunaga-hostetler-1975} or its
variants. \citet{menardi-2016} gives a review on modal clustering
methods.

For modal clustering, one major difficulty is how to tackle spurious
modes that occur frequently in density estimation. Varying the
smoothness for density estimation is a natural choice to deal with
them. \citet{minnotte-scott-1992} described, without implementing it, a new
concept of ``mode tree'' based on the fact that the number of modes of
a kernel-based density estimator (KDE) decreases as the bandwidth
increases. In a similar spirit, \citet{li-ray-lindsay-2007} proposed the
hierarchical mode association clustering (HMAC) algorithm, which
produces a hierarchical clustering tree by gradually increasing the
bandwidth.  A drawback of HMAC is that the increase of a global
bandwidth may result in severe distortion in density estimates and
thus produce inappropriate clustering outcomes. To overcome this
problem, \citet{hu-wang-2021} proposed a modal clustering method MDE-MF. It
makes use of a mode-flattening (MF) technique, which locally, rather
than globally, modifies the smoothness of density estimates. They also
used a mixture-based density estimator (MDE) to help improve
estimation accuracy and higher-dimensional applicability. However,
though to a lesser extent, the increase of a local bandwidth by MDE-MF
also distorts the density and can thus fail to produce satisfactory
clustering results for harder clustering problems, which will be
further demonstrated in Section~\ref{sec:numeric}.

In this paper, we propose a new modal clustering method that works
with any given density function that is a Morse function. Based on the
new properties of Morse density functions that are established in this
paper, the method avoids the increase of bandwidth value(s) and does
not suffer from the above-mentioned density distortion problem. It
gives more satisfactory results for harder clustering problems and
higher clustering accuracy for common problems. The method is guided
by two novel ideas presented in this paper: cluster adjacency and
significance assessment. As will be detailed later, all clusters are
connected through adjacent clusters and a cluster is only merged into
one of its adjacent clusters at each step, while the significance of a
cluster is assessed by its excess mass, using the Kullback-Leibler
divergence or its log-likelihood approximation. To assess the excess
mass is to truncate the density bump for a cluster at an appropriate
level, but the excess mass is not filled back elsewhere, thus avoiding
density distortion anywhere else. The process continues and creates a
sequence of sub-densities until one large cluster covering the entire
support of the density is obtained. The method is a hierarchical one,
and a dendrogram can be produced. A user can choose the number of
clusters from the clustering tree for further applications.

Throughout the paper, we use usual notations for set operations. For
any $\epsilon > 0$, the $\epsilon$-ball centred at $x \in \Real^d$ is
defined as
\[ 
  S_\epsilon(x) = \{y \in \Real^d: \norm{y - x} < \epsilon\},
\]
where $\norm{\cdot}$ denotes the Euclidean norm. The boundary of a set
$A$ is denoted by $\partial A$, i.e.,
\begin{align*}
  \partial A & = \{x \in \Rbb^d: S_\epsilon(x) \cap A \ne \emptyset
               \mbox{~and~} S_\epsilon(x) \cap A^c \ne \emptyset),
               \mbox{~for~any~}\epsilon > 0\}, 
\end{align*}
where $A^c$ denotes the complement of $A$ in $\Real^d$. Further, $A$
has interior $A^\circ$ and closure $\bar A$, and $A \backslash B$
indicates the complement of $B$ in $A$.

The rest of this paper is organized as follows.
Section~\ref{sec:morsefunctions} briefly reviews modal clustering with
Morse functions.  Section~\ref{sec:connected-clusters} proposes the
concept of cluster adjacency and connectedness and establishes some
new properties that are critical for developing our new clustering
method.  Section~\ref{sec:method} discusses the proposed modal cluster
merging method in full detail. Section~\ref{sec:implementation}
describes the key steps and some practical issues for implementing the
method. In Section~\ref{sec:numeric}, we compare the performance of
the new method and several other popular clustering methods with both
simulated and real-world datasets. Finally, a summary and some remarks
are given in Section~\ref{sec:conclusion}.

\section{Modal Clustering with Morse Functions}
\label{sec:morsefunctions}

This section briefly reviews modal clustering using Morse functions
and establishes the notation used to lay the foundation for this work.
For any density function $f$ that is used throughout this paper for
modal clustering, we assume:
\begin{quote}
  (A) $f$ has support $\Real^d$, is a Morse function and has finitely
  many critical points.
\end{quote}
To remain focused and avoid tedious technical complications, we will
only consider the support $\Real^d$, which is mostly assumed in the
literature. We realize that more general types of support can be used
for our method, and will give a brief discussion in
Section~\ref{sec:conclusion}.

A Morse function has only non-degenerate critical points, i.e., the
Hessian at any critical point is non-singular. As a result, critical
points are isolated and one can find local coordinates
$x_{(1)}, \dots, x_{(d)}$ such that $f$ can be written as
\begin{align}
  \label{eqn:morse-lemma}
  f(x) = f(x^*) - x_{(1)}^2 - \dots - x_{(\lambda)}^2 + x_{(\lambda+1)}^2 +
  \dots + x_{(d)}^2
\end{align}
around a critical point $x^*$, where $\lambda = \lambda(x^*)$, the
number of the negative signs in (\ref{eqn:morse-lemma}) or indeed the
number of the negative eigenvalues of the Hessian matrix at $x^*$, is
known as the Morse index of the critical point. We refer the reader to
\cite{milnor-1963}, \cite{matsumoto-2002},
\cite{banyaga-hurtubise-2004}, \cite{lee-2012} and \cite{jost-2017}
for extensive coverage of the Morse theory and the relevant theory of
smooth manifolds.  Owing to the unique and interesting properties of
Morse functions, they are widely considered for modal clustering. In
aid of Morse theory, \citet{azizyan-chen-etal-2015z} developed a bound
on the Rand index of modal clustering and showed that the clustering
risk is low when clusters are well-separated, even in high-dimensional
scenarios. \citet{chen-genovese-wasserman-2016} proposed several
enhancements to modal clustering. For more works on Morse theory and
modal clustering, see, e.g., \cite{chacon-2015},
\cite{ariascastro-mason-pelletier-2016},
\cite{chen-genovese-wasserman-2017}, \cite{chacon-2019},
\cite{kejzar-korenjakcerne-batagelj-2021} and the references therein.

Now let us consider a density function $f$ that satisfies condition
(A).  Each of its critical points can be a maximum (mode), a minimum
or a saddle point, and any non-critical point is also known as a
regular point. The (gradient) flow starting at a point
$x \in \Real^d$, is defined as the path
$\pi: \mathbb{R} \rightarrow \mathbb{R}^d$ such that
\begin{align*} 
  \pi(0; x, f) & = x, \\
  \pi'(t; x, f) & = \triangledown f(\pi(t; x, f)),
\end{align*}
where $\triangledown f$ denotes the gradient vector of $f$ and thus
$\pi'(t; x, f)$ the direction of the steepest ascent at
$\pi(t; x, f)$. The entire curve $\pi(t; x, f)$, $t \in \Real$, is
known as a flow line, and for simplicity we denote the flow line
starting at $x$ by
\begin{align*}
  \pi(\Real; x , f) = \{\pi(t; x, f): t \in \Real\}.
\end{align*}
It is known that starting at any other point on a flow line
$\pi(\Real; x, f)$ creates the same flow line, and that two flow lines
are either disjoint or identical. The destination of a flow line is
defined as
\[
  \pi(\infty; x, f) = \lim_{t \to \infty}\pi(t; x, f),
\]
which must be a critical point. A flow line is either constant (i.e.,
at a critical point) or strictly increasing (i.e., at a regular
point), as $t$ increases. Also, we denote the set of destinations of
the flow lines starting at each point in a set $\mathcal{X}$ as
\[
  \pi(\infty; \mathcal{X} , f) = \{\pi(\infty; x, f):  x \in \mathcal{X}\}.
\]
The basin of attraction of a critical point $x^*$ is defined as
\[ 
  C(x^*) = \{x \in \Real^d: \pi(\infty; x, f) = x^*\},
\]
which is a connected manifold of dimension $\lambda(x^*)$. This is
also known as the stable manifold at $x^*$ of the flow, while
\[ 
  C^u(x^*) = \{x \in \Real^d: \pi(-\infty; x, f) = x^*\}
\]
is the unstable manifold of dimension $d - \lambda(x^*)$ at $x^*$. The
sample space $\Rbb^d$ is the disjoint union of manifolds $C(x^*)$ of
dimensions ranging from $0$ to $d$, over all critical points $x^*$.

For a density $f$, let us denote by $M_j$ its set of critical points
of Morse index $j$, e.g., $M_d$ denotes the set of the modes and
$M_{d-1}$ the set of the critical points of Morse index $d-1$.  For
any mode $x^* \in M_d$, $C(x^*)$ is a $d$-manifold and defines a modal
cluster, and the union set $\cup_{x^* \in \cup_{j < d} M_j} C(x^*)$
forms the boundaries of the modal clusters. Note that the boundary
points are undetermined for clustering purposes. This causes no
practical problem, as the set is of Lebesgue measure $0$. One may
arbitrarily allocate the boundary points to any of its bordering
clusters.

\begin{figure}[!tbh]  
  \centering
  \includegraphics[width=0.9\textwidth]{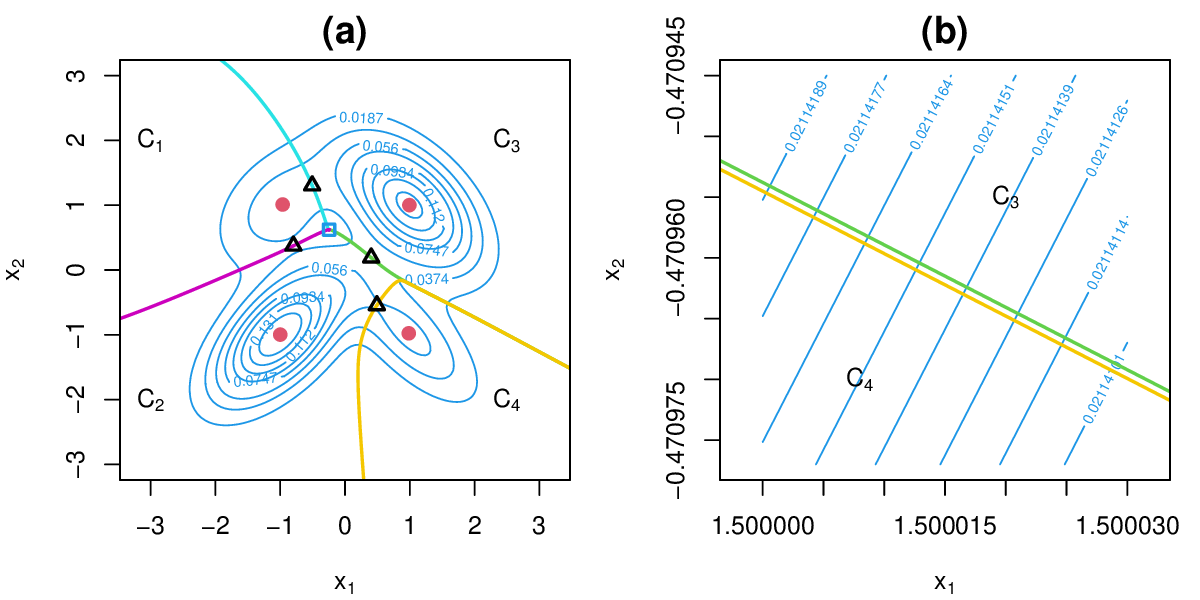}
  \caption{(a) A density with four modes; (b) A blowup of a small
    area} 
  \label{fig:bound}
\end{figure}

As a simple illustration, Fig.~\ref{fig:bound}(a) shows how modal
clustering works, using the true density of the Quadrimodal
distribution described in \citet{wand-jones-1993}. As shown in the
plot, the density function has $4$ modes (solid points), and $4$
saddle points (triangles) and a minimum (square). The curves (in
different colours) consist of the flow lines whose destinations are
non-mode critical points. These curves form the boundaries of the $4$
modal clusters, labelled $C_1$--$C_4$. It is interesting to note that
the boundaries of $C_3$ and $C_4$ never intersect, no matter how
seemingly close they are to each other for $x_1$ large, as shown in
Fig.~\ref{fig:bound}(b). The narrow strip between the two boundary
curves shown in Fig.~\ref{fig:bound}(b) belongs to cluster $C_2$, even
though it is closer in Euclidean distance to the modes of $C_3$ and
$C_4$.

\section{Cluster Adjacency and Connectedness}
\label{sec:connected-clusters}

In this section, we define some new concepts and establish some
relevant theoretical properties about modal clusters, which are
critically important for developing the new method described in
Section~\ref{sec:method}. Also, we may simply say a cluster to mean a
modal cluster in this section; in later sections, a cluster may also
mean a union of modal clusters.

While one may simply define that two clusters are adjacent if their
boundaries have a non-empty intersection, we can actually choose a
more restrictive, yet equivalent definition. We say that two different
clusters $C_i = C(x_i^*)$ and $C_j = C(x_j^*)$,
$x_i^*, x_j^* \in M_d$, are \emph{adjacent}, if
$\partial C_i \cap \partial C_j$ contains any non-empty subset of the
stable manifold of a critical point of Morse index $d-1$. As will be
seen, this will lead to the use of the critical points in $M_{d-1}$
only. These critical points tend to have higher density values, and
using them only is sufficient and gives the same performance for our
cluster merging algorithm. We hence only need to consider a smaller
number of critical points, rather than all critical points. The
largest union of the subsets of the stable $(d-1)$-manifolds contained
in $B_{ij} \equiv \partial C_i \cap \partial C_j$ is said to be the
\emph{border} between $C_i$ and $C_j$.  For adjacent clusters, we may
also say that one cluster is an (adjacent) neighbour of the other,
while borderless clusters are not neighbours. Adjacent clusters are in
close vicinity to each other and can potentially be merged, and
clusters without borders can not be merged into each other directly.

The cluster merging algorithm to be presented later requires the
following theoretical results. Note that condition (A) is always
assumed.

\begin{theorem}
  \label{thm:boundary-flow-line}
  $\pi(\Real; x , f) \subseteq \partial C(x^*)$ if
  $x \in \partial C(x^*)$, $x^* \in M_d$.
\end{theorem}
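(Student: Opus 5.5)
The approach is to exploit the invariance of basins of attraction under the gradient flow: I will show that both $C(x^*)$ and its complement $C(x^*)^c$ are unions of flow lines, and that the flow maps are homeomorphisms. Boundaries are then carried to boundaries.

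The first step is invariance. For any $y = \pi(s; x, f)$ on the flow line through $x$, we have $\pi(t; y, f) = \pi(t+s; x, f)$. Hence $\pi(\infty; y, f) = \pi(\infty; x, f)$, so $y \in C(x^*)$ if and only if $x \in C(x^*)$. Thus $C(x^*)$, and therefore also $C(x^*)^c$, is a union of entire flow lines.

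The second step uses the time-$s$ flow map $\phi_s(z) = \pi(s; z, f)$. Under condition (A), I take the flow to be complete, defined for all $t \in \Real$, since the paper defines $\pi$ on all of $\Real$. The map $\phi_s$ is then continuous, with continuous inverse $\phi_{-s}$, by continuous dependence of ODE solutions on initial conditions ($\triangledown f$ is smooth, hence locally Lipschitz). Now fix $x \in \partial C(x^*)$, let $y = \phi_s(x)$, and take any $\epsilon > 0$. Continuity of $\phi_{-s}$ at $y$ gives a $\delta > 0$ with $\phi_{-s}(S_\delta(y)) \subseteq S_\epsilon(x)$; the step I actually need runs the other way. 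By continuity of $\phi_s$ at $x$, for every $\epsilon>0$ there is $\delta>0$ with $\phi_s(S_\delta(x)) \subseteq S_\epsilon(y)$. Because $x \in \partial C(x^*)$, the ball $S_\delta(x)$ contains a point $u \in C(x^*)$ and a point $v \notin C(x^*)$. By the first step, $\phi_s(u) \in C(x^*)$ and $\phi_s(v) \notin C(x^*)$, and both lie in $S_\epsilon(y)$. Hence $y \in \partial C(x^*)$. Since $s$ is arbitrary, $\pi(\Real; x, f) \subseteq \partial C(x^*)$.

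The main obstacle I expect is justifying that $\phi_s$ is globally defined and continuous for negative $s$ on all of $\Real^d$. Backward flow lines move down the density, and a priori they could escape to infinity in finite time. To handle this, I would note that the paper's definition of $\pi: \Real \to \Real^d$ already presupposes completeness. If a justification is needed, one can replace $\triangledown f$ by the rescaled field $\triangledown f/(1+\norm{\triangledown f})$. This field is bounded, so its flow is complete, and it has the same flow lines with the same orientation, hence the same destinations and the same basins. The argument above then applies verbatim to this reparametrised flow. Beyond that, only local continuity at the single point $x$ is needed, so no uniform estimates are required.
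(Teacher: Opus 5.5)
Your proof is correct, and it takes a genuinely different route from the paper's.

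The paper argues by contradiction at the first exit time $\tilde t = \inf\{t \ge 0: \pi(t; x, f) \notin \partial C\}$. Near the regular point $\tilde x = \pi(\tilde t; x, f)$, it says, the boundary would have to ``bend away'' from the flow line. It would then cut across nearby flow lines running through both $C$ and its exterior, which is impossible because no flow line can meet two clusters. Critical $x$ is dismissed as obvious. That argument rests on a flow-box picture that is only sketched: what it means for $\partial C$ to bend away or cut through flow lines is never made precise.

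You instead observe two things. First, $C(x^*)$ and its complement are both unions of entire flow lines. Second, each time-$s$ map $\phi_s$ is continuous at $x$. So for every $\epsilon>0$ some ball $S_\delta(x)$, which meets both $C$ and $C^c$, is carried into $S_\epsilon(\phi_s(x))$, which therefore also meets both.

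What your approach buys:
\begin{itemize}
\item It is rigorous and purely topological.
\item It treats critical and regular $x$ uniformly.
\item It uses neither the Morse condition nor the fact that $x^*$ is a mode; it shows that the boundary of any flow-invariant set is flow-invariant.
\item It addresses the completeness of backward flow lines, which the paper assumes silently.
\end{itemize}
What the paper's approach buys is a local geometric picture of why boundaries consist of flow lines, though at the cost of rigor.

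On the completeness point, your rescaling by $1+\norm{\triangledown f}$ works because it preserves oriented trajectories as sets, and hence basins and flow lines. It is not actually needed, though. The domain of the maximal flow is open and the flow is continuous on it, so $\phi_s$ is defined and continuous on a neighbourhood of $x$ for every $s$ in the maximal interval of $x$.

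Finally, the sentence about continuity of $\phi_{-s}$ at $y$ is a false start you can delete, since only continuity of $\phi_s$ at $x$ is used.
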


\begin{proof}
    
  It is obvious, if $x$ is a critical point.

  Now let $x$ be a regular point, and write $C = C(x^*)$. We only
  focus on $t > 0$, since the proof is analogous for $t < 0$. Assume
  that $\pi(t; x, f) \notin \partial C$ for some $0 < t < \infty$.
  Let
  \begin{align*}
    \tilde t = \inf \{t \in [0,\infty): \; \pi(t; x, f) \notin
    \partial C\},
  \end{align*}
  and denote $\tilde{x} = \pi(\tilde{t}; x, f)$.  Clearly,
  $\tilde{x} \in \partial C$. Consider $S_\epsilon(\xt)$, for a
  sufficiently small $\epsilon > 0$.  Since $\partial C$ contains the
  flow line $\pi(t; x, f)$ all the way up to $\tilde{t}$,
  $S_\epsilon(\xt)$ contains a $d$-dimensional open subset of $C$ and
  a $d$-dimensional open subset of the exterior of $C$ which belong to
  other clusters. Because $\partial C$ starts to bend away from the
  flow line $\pi(\Real; x , f)$ at point $\xt$, it must cut through
  some flow lines that pass through the two open subsets, because the
  flow lines are smooth (three times differentiable) and, in a
  sufficiently small ball, their gradients can not deviate to any
  significant extent from the non-zero gradient of the flow line at
  the regular point $\xt$. Since no flow line can pass through areas
  of more than one cluster, this is a contradiction, which completes
  the proof.
\end{proof}

\begin{theorem}
  \label{thm:adjacent-clusters}
  There is at least a critical point in $M_{d - 1}$ on the border
  between two adjacent clusters, and the maximum of the density
  function over the border is attained at a critical point in
  $M_{d - 1}$.
\end{theorem}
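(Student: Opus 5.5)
The plan is to work with the border $B$ between adjacent clusters $C_i$ and $C_j$. By definition, $B$ is the largest union of subsets of stable $(d-1)$-manifolds $C(y^*)$, $y^*\in M_{d-1}$, contained in $B_{ij}=\partial C_i\cap\partial C_j$. Adjacency guarantees that $B$ is nonempty, so there is a point $x\in B$ with $x\in C(y^*)$ for some $y^*\in M_{d-1}$.

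The first step is to show that $y^*$ itself lies on $B$. By Theorem~\ref{thm:boundary-flow-line}, applied to both $x\in\partial C_i$ and $x\in\partial C_j$, the whole flow line $\pi(\Real;x,f)$ lies in $\partial C_i\cap\partial C_j$. Boundaries are closed, so its limit point $\pi(\infty;x,f)=y^*$ also lies in $B_{ij}$. Since $y^*\in C(y^*)$ trivially, the singleton $\{y^*\}$, and indeed the whole flow line together with its destination, is a subset of the stable $(d-1)$-manifold contained in $B_{ij}$. Hence $y^*$ belongs to the border, which proves the first assertion.

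The second assertion is the maximum. Take any point $z$ on the border. Then $z$ lies in some $C(y^*)$ with $y^*\in M_{d-1}$, and by the previous paragraph $y^*$ is on the border. Along a flow line $f$ is nondecreasing, strictly increasing at regular points, so $f(z)\le f(\pi(\infty;z,f))=f(y^*)$. Therefore
\[
\sup_{z\in B} f(z)=\max\{f(y^*): y^*\in M_{d-1}\cap B\},
\]
and this is a maximum over a finite nonempty set by condition (A). So the maximum over the border is attained at a critical point in $M_{d-1}$.

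The main obstacle is the definitional subtlety in the first step. The border is defined only as a union of subsets of stable $(d-1)$-manifolds lying in $B_{ij}$, so we must check that the destination $y^*$ of a border point is genuinely in $B_{ij}$. This needs Theorem~\ref{thm:boundary-flow-line} together with closedness of $\partial C_i$ and $\partial C_j$. Closedness holds directly from the paper's definition of $\partial A$: if every ball around a limit point contains points of the boundary, it meets both $A$ and $A^c$. I would also make explicit that "largest union" includes the flow-line closures; this follows because $\{y^*\}\subseteq C(y^*)\cap B_{ij}$.
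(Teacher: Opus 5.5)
Your proposal is correct and follows essentially the same route as the paper's proof. In both, Theorem~\ref{thm:boundary-flow-line} keeps the flow line through a border point on both $\partial C_i$ and $\partial C_j$, so its destination in $M_{d-1}$ lies on the border, and monotonicity of $f$ along flow lines then gives the maximum. Your additions make explicit steps the paper leaves implicit: closedness of the boundaries, which places the limit point $y^*$ in $B_{ij}$, and finiteness of $M_{d-1}$, which turns the supremum into an attained maximum.
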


\begin{proof}
  The border contains at least a critical point or a regular point in
  a stable $(d-1)$-manifold. In either case, the corresponding flow
  line stays on the boundaries of both clusters by
  Theorem~\ref{thm:boundary-flow-line}, and it thus belongs to the
  border between the two and so does its destination, a critical point
  in $M_{d-1}$ . The first claim thus follows.

  The second claim also follows easily, since every flow line on the
  border is either constant or strictly increasing, with its
  destination being a critical point in $M_{d - 1}$, also on the
  border.
\end{proof}

This result explains why clusters $C_3$ and $C_4$ in
Fig.~\ref{fig:bound} have no border, even though their boundaries are
extremely close to each other for $x_1$ large. This is simply because
there is no saddle point in this area.

A cluster is said to be \emph{connected} to another, if there is a
continuous path in $\Real^d$ connecting the two clusters, possibly
passing through any clusters and their common borders.

\begin{theorem}
  \label{thm:all-connected}
  All clusters are mutually connected.
\end{theorem}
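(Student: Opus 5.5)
The plan is to read "connected" as in the definition just given: the path may only run through modal clusters $C(x^*)$, $x^*\in M_d$, and through borders, i.e.\ stable manifolds of critical points in $M_{d-1}$. The idea is that everything the path is \emph{not} allowed to touch is small. That forbidden set is the union $L=\cup_{x^*\in \cup_{j\le d-2}M_j} C(x^*)$. Since $\Real^d$ is the disjoint union of all stable manifolds, $L$ is a finite union (by (A)) of manifolds of dimension at most $d-2$, so it has codimension at least two. I would show that any two points outside $L$ can be joined by a path avoiding $L$. I would then check that such a path can only move between clusters by crossing borders, so that it links any two clusters through a chain of adjacent ones.

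\textbf{Step 1 (avoiding $L$).} Fix $p\in C_i$ and $p'\in C_j$; both are interior points of $d$-manifolds, so $p,p'\notin L$. For a single stable manifold $N$ of dimension $k\le d-2$, consider the smooth map
\[
  \Phi:N\times\Real\to\Real^d,\quad \Phi(y,s)=p+s(y-p).
\]
Its domain has dimension at most $d-1$, so its image has Lebesgue measure zero; this image contains every point $q$ for which the segment $[p,q]$ meets $N$. Doing the same for $p'$, and over the finitely many such $N$, leaves a full-measure set of $q$ for which neither $[p,q]$ nor $[q,p']$ meets $L$. Any such $q$ (which also avoids the null set $L$ itself) gives a polygonal path $\gamma$ from $p$ to $p'$ lying in
\[
  \Bigl(\cup_{x^*\in M_d}C(x^*)\Bigr)\cup\Bigl(\cup_{x^*\in M_{d-1}}C(x^*)\Bigr).
\]

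\textbf{Step 2 (crossings are borders).} Let $y\in C(z^*)$ be a point of $\gamma$ with $z^*\in M_{d-1}$. Points of such a stable manifold are boundary points of modal clusters, since $\cup_{j<d}M_j$-stable manifolds form the cluster boundaries. Near $y$ the $(d-1)$-manifold $C(z^*)$ locally separates a small ball into two sides. Each side lies in a single modal cluster, because the ball avoids $L$ and the other $(d-1)$-strata once the radius is small. So the two sides lie in clusters $C_a,C_b$, possibly with $a=b$. Hence $y\in\partial C_a\cap\partial C_b$, and this intersection contains an open piece of $C(z^*)$. By the definition of adjacency and Theorem~\ref{thm:boundary-flow-line}, if $a\neq b$ then $C_a,C_b$ are adjacent and the path passes through their common border.

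\textbf{Step 3 (chaining).} Reading off the clusters visited along $\gamma$ gives a finite sequence $C_i=C_{a_0},C_{a_1},\dots,C_{a_m}=C_j$, with consecutive distinct members adjacent. Finiteness follows because $\gamma$ is a compact polygon, and I would use genericity of $q$ to arrange transversal, hence isolated, crossings. This proves the theorem, and it also shows that the adjacency graph on modes is connected.

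The main obstacle is the local picture in Step~2 together with the finiteness of crossings in Step~3. Stable manifolds are only immersed, not closed, and their closures contain lower strata. I would handle this with the Morse-lemma coordinates (\ref{eqn:morse-lemma}) near critical points and the flow-box theorem at regular points. Combined with Theorem~\ref{thm:boundary-flow-line}, these show that a regular point of a $(d-1)$-stable manifold has a neighbourhood split by that manifold into exactly two cluster pieces. For finiteness, I would choose $q$ so that $\gamma$ is transverse to every $(d-1)$-stratum, which is again a measure-zero exclusion by Sard's theorem.
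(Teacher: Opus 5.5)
\textbf{Gap in Steps~2--3.} Your Step~1 is correct and is the real content of a valid proof: $L$ is a finite union of manifolds of dimension at most $d-2$, so $\Real^d\setminus L$ is path-connected. The gap is in Step~2, and the passage from transversality to finitely many crossings in Step~3 rests on the same claim.

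Assumption (A) makes $f$ Morse but not Morse--Smale, so saddle-to-saddle connections are allowed. A branch of the one-dimensional unstable manifold of $z^*\in M_{d-1}$ may flow up into another $w\in M_{d-1}$. For $d\ge 3$ it may instead flow into a critical point of index at most $d-2$, in which case $z^*\in\overline{L}$. Flowing backwards past $z^*$, the sheets of $C(w)$ then accumulate on a lower-dimensional subset of $C(z^*)$ through $z^*$, which for $d\ge 3$ can contain regular points.

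At such a $y$, every ball meets $C(w)$, so it is false that a small ball avoids the other $(d-1)$-strata. If the two unstable branches of $w$ reach different modes, then in every neighbourhood of $y$ the side of $C(z^*)$ facing $w$ meets two clusters. So the statement you plan to prove with flow boxes is false under (A): a regular point of a $(d-1)$-stable manifold need not have a neighbourhood split into exactly two cluster pieces. Your generic choice of $q$ does not exclude these points, and showing that the bad set is thin enough to avoid would take genuine invariant-manifold work.

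\textbf{A repair that needs no local picture.} The paper's adjacency only asks that $\partial C_a\cap\partial C_b$ contain one point of some $C(z^*)$ with $z^*\in M_{d-1}$. Let $\mathcal{A}$ be the set of clusters reachable from $C_i$ by chains of adjacent clusters. Let $O_1$ and $O_2$ be the unions of the clusters in and not in $\mathcal{A}$. They are disjoint open sets with dense union, since all non-mode strata are null.

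Suppose a point $y$ of a $(d-1)$-stratum lies in $\overline{O_1}\cap\overline{O_2}$. Then $y\in(\overline{C_a}\cap\overline{C_b})\setminus(C_a\cup C_b)=\partial C_a\cap\partial C_b$ for some $C_a\in\mathcal{A}$ and $C_b\notin\mathcal{A}$. That would make $C_a$ and $C_b$ adjacent, a contradiction. Since also $O_1\cap\overline{O_2}=O_2\cap\overline{O_1}=\emptyset$, it follows that $\overline{O_1}\cap\overline{O_2}\subseteq L$.

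Hence $\Real^d\setminus L$ is the disjoint union of the relatively closed sets $\overline{O_1}\setminus L$ and $\overline{O_2}\setminus L$. Your Step~1 works for any two points off $L$, so it forces $O_2=\emptyset$. This replaces Steps~2--3 entirely.

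\textbf{Comparison with the paper.} With that repair, your route differs genuinely from the paper's. The paper takes $U=(\cup_i\Cbar_i)^\circ$ for a proper subcollection and picks a point of $\partial U$ off the lower-dimensional sets $\partial B_{ij}$. It follows that point's flow line to a critical point whose unstable directions leave $U$. This is an expansion argument: every proper subcollection touches an outside cluster. It presupposes that $\partial U$ is a $(d-1)$-manifold, and it only produces a critical point of index less than $d$, not necessarily one in $M_{d-1}$.

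Your codimension-two argument needs only that smooth images of lower-dimensional manifolds are null. It yields exactly what the merging algorithm needs: the adjacency graph, whose edges pass through $(d-1)$-stable manifolds, is connected.
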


\begin{proof}
  Consider an arbitrary proper sub-collection of clusters, say
  $\Ccal = \{C_i\}$, which are not necessarily mutually connected. Let
  $U = (\cup_i \Cbar_i)^\circ$. We only need to show that $U$ is
  connected to one cluster not in $U$. Note that $U$ is a proper
  subset of $\Real^d$ and $\partial U$ is a $(d-1)$-manifold. For any
  two adjacent $C_i, C_j \in \Ccal$, any $(d-1)$-manifold in $B_{ij}$
  must be in $U$. Only $\partial B_{ij}$ can possibly be in
  $\partial U$.  Since $\partial B_{ij}$ can only be a union of
  manifolds of dimensions lower than $d-1$ and there are only
  countably many $\partial B_{ij}$, we have
  $\partial U \backslash (\cup_{i,j} \partial B_{ij}) \ne
  \emptyset$. Choose any point in
  $\partial U \backslash (\cup_{i,j} \partial B_{ij})$. Then the flow
  line starting at it will remain on $\partial U$ and has a
  destination being a critical point $x^*$ of Morse index less than
  $d$, which is not in any $\bar B_{ij}$. The critical point $x^*$ has
  an unstable manifold, and the corresponding eigenvectors of the
  Hessian matrix of $f$ at $x^*$ are orthogonal to $\partial U$. Any
  flow line in the direction of leaving $U$ has a destination in the
  exterior of $U$. Hence $U$ is connected to at least one cluster
  outside, thus completing the proof.
\end{proof}

Theorem~\ref{thm:all-connected} is important for our hierarchical
clustering method, as it means that the clustering can proceed step by
step and will finish with one overall cluster, i.e., $\Real^d$.

\section{Cluster Significance and Merging}
\label{sec:method}

Modal clustering can proceed given a density function $f$, be it of
the population or an estimate from a sample. With the modes of $f$
identified, initial clusters can be determined according to the flow
lines as described in Section~\ref{sec:connected-clusters}. As
explained by \cite{hu-wang-2021}, one can be interested in reducing
the number of clusters, as $f$ may just be a density estimate
containing spurious modes, or as one may just want a simpler
clustering result for a fixed, smaller number of clusters even if
initial clusters are already statistically significant. By analogy,
one might focus on identifying individual hills and their distinct
peaks. However, it is also valuable to examine larger mountains—which
comprise multiple hills—or even entire mountain ranges. To achieve
this, \cite{hu-wang-2021} considered flatting a mode by increasing the
local bandwidth value. This redistributes the probability mass around
the mode to its surrounding region. However, this distorts the density
function increasingly as the bandwidth increases, and in certain cases
it may lead to the incorrect merging of major modal clusters before
merging the necessary minor ones.

To avoid density distortion, here we propose a truncation technique
that removes the bump associated with a cluster relative to all other
clusters.  This technique is somewhat related to the idea of ``excess
mass'' used by some multimodality tests
\citep{muller-sawitzki-1991,fisher-marron-2001,ameijeirasalonso-crujeiras-rodriguezcasal-2019},
where the excess mass is defined as the amount of mass that is to be
shifted to convert a bimodal distribution into a unimodal one.
However, our method does not fill the removed excess mass back in
anywhere else, which can itself be very tricky especially in a
high-dimensional space, and thus does not distort the density
elsewhere. This turns the density function $f$ into a sub-density
function, say, $\ft$, the integral of which is less than $1$. We then
use the Kullback-Leibler (KL) divergence \citep{kullback-leibler-1951}
of $\ft$ from $f$ to assess the (relative) significance of the
cluster:
\[
  \Delta(\ft, f) = \int_{\Real^d} f(x) \log
  \left[\frac{f(x)}{\ft(x)}\right] \ud x.
\]
This assessment is repeated for every cluster to find the least
significant one, i.e., the one with the smallest $\Delta$-value, which
is then merged into one of its neighbours that shares a critical point
in $M_{d-1}$. In the rare case when there are more than one such
neighbours, we simply choose the most significant one for cluster
merging. This may occur for a Morse function, because a Morse function
explicitly defines local properties around isolated critical points,
rather than prescribing global constraints on the flowlines. We prefer
to create a larger merged cluster, as it tends to give smaller
$\Delta$-values at subsequent merging steps.

The merging process starts with the initial clusters identified by the
modes of $f$ and merges one cluster into another at each step. Suppose
it is at the $k$-cluster level, i.e., when there are $k$ remaining
clusters, the collection of which is denoted by
$\Ccal^{(k)} = \{C_1^{(k)}, \ldots,C_k^{(k)} \}$. Assume that
$C_i^{(k)}$ is deemed the least significant and is to be merged into
$C_j^{(k)}$. Then the new cluster is the set
$C_i^{(k)} \cup C_j^{(k)}$---more precisely, the interior of the set
$C_i^{(k)} \cup C_j^{(k)} \cup B_{ij}^{(k)}$. With each merging of two
clusters, the number of clusters decreases by $1$. The merging
continues until all clusters coalesce into one, i.e., $\Real^d$,
through the process of which a hierarchical clustering tree is
produced.

The truncation technique is detailed as follows. Still assume the
merging process at the $k$-cluster level with sub-density function
$\ft^{(k)}(x)$. For cluster $C_i^{(k)}$, let
\begin{align}
  \label{eqn:fbar-ik}
  \fbar_i^{(k)} = \sup_{x \in \partial C_i^{(k)}} \ft^{(k)}(x),
\end{align}
i.e., the highest density value on its boundary. According to
Theorem~\ref{thm:adjacent-clusters}, $\fbar_i^{(k)}$ can only be
attained at a critical point of $f$ in $M_{d-1}$. The sub-density
function after truncating the bump of cluster $C_i^{(k)}$ at the
height of $\fbar_i^{(k)}$ is thus given by
\begin{align}
  \label{eqn:ft}
  \ft_i^{(k)}(x) = \left\{
    \begin{array}{ll}
      \ft^{(k)}(x), & \mathrm{~~if~~} x \notin C_i^{(k)}; \\
      \ft^{(k)}(x) \wedge  \fbar_i^{(k)}, & \mathrm{~~if~~} x \in C_i^{(k)},
    \end{array} \right.
\end{align}
i.e., only points in the $i$th cluster with density values greater
than $\fbar_i^{(k)}$ have their density values truncated to
$\fbar_i^{(k)}$.  The significance of a cluster is thus assessed with
$\Delta_i^{(k)} \equiv \Delta(\ft_i^{(k)}, f)$. Find $\Delta_i^{(k)}$
for all $i = 1, \dots, k$, and the least significant cluster is the
one with the smallest value of $\Delta_i^{(k)}$ and is to be merged
into one of its neighbours sharing the same critical point on their
border. Let cluster $C_{\bar i}^{(k)}$ be the least significant. Then,
with its merging away, the sub-density at the $(k-1)$-cluster level
becomes $\ft^{(k-1)} = \ft_{\bar i}^{(k)}$, which has the following KL
divergence from $f$:
\begin{align}
  \Delta(\ft^{(k-1)}, f)
  & = \int_{\Real^d} f(x) \log
    \left[\frac{f(x)}{\ft^{(k-1)}(x)}\right] \ud x \nonumber \\
  & = \int_{\Real^d} f(x) \log
    \left[\frac{f(x)} {\ft^{(k)}(x)} \cdot \frac{f^{(k)}(x)}
    {\ft^{(k-1)}(x)} \right] \ud x \nonumber \\ 
  & = \Delta(\ft^{(k)}, f)  + \int_{\Real^d} f(x) \log
    \left[\frac{\ft^{(k)}(x)}{\ft^{(k-1)}(x)}\right] \ud x,
  \label{eqn:KL-additive}
\end{align}
where $f = f^{(m)}$, $m$ being the number of initial clusters. Note
that $\ft^{(k)}$ and $\ft^{(k-1)}$ only differs in the region
corresponding to the truncated density bump of cluster
$C_{\bar i}^{(k)}$ and thus the integral in \eqref{eqn:KL-additive}
only needs to be evaluated in this region.

A rationale establishing that $\Delta_i^{(k)}$ reflects the
significance of the modal cluster $C_i^{(k)}$ is as follows. By
definition, $C_i^{(k)}$is classified as a modal cluster because it
contains a local mode. Truncating the density bump above the highest
density value on its boundary represents the threshold case required
to invalidate $C_i^{(k)}$ as a distinct modal cluster. This truncation
removes the minimum amount of probability mass necessary to eliminate
the mode. To assess the statistical significance of this lost
probability mass, the KL divergence serves as a natural and
mathematically sound choice. It provides a formal measure of how much
a sub-probability distribution differs from a reference distribution,
and it also maintains a direct connection to the likelihood framework
(see Section~\ref{sec:clustersignificance}).

\begin{figure}[!tbh]
  \centering
  \includegraphics[width=0.9\linewidth]{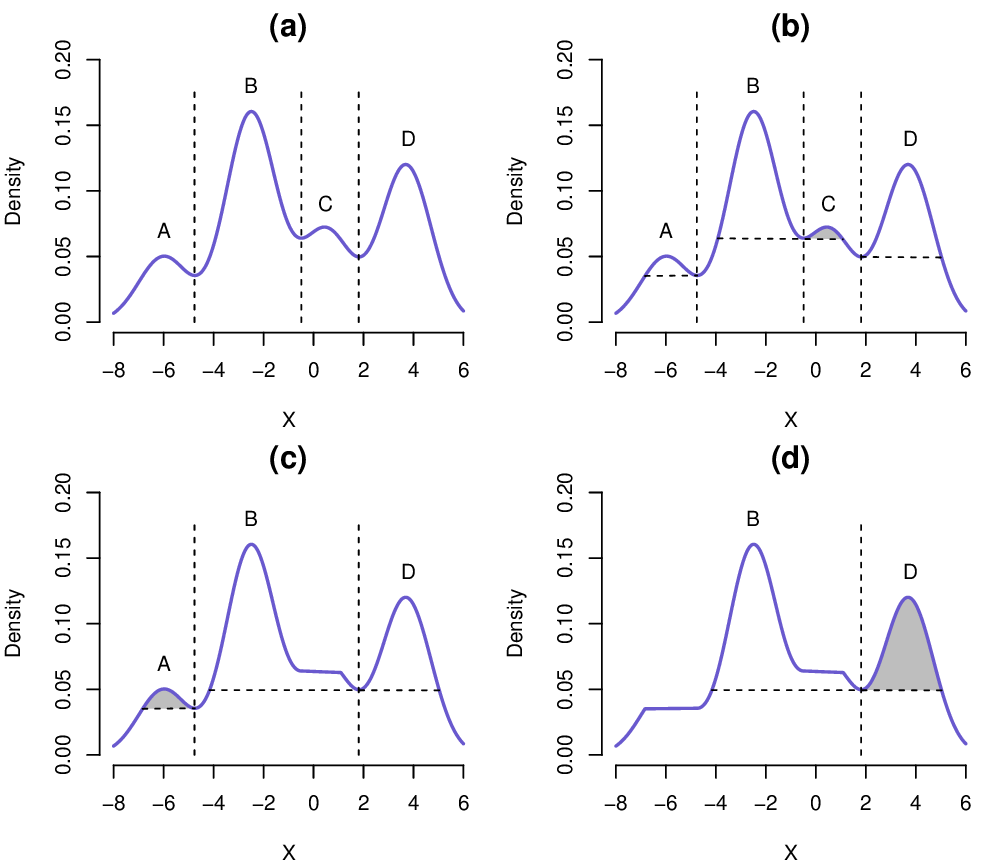}
  \caption{An illustrative example of the cluster-merging process. (a)
    Four initial modal clusters, separated by three local minima,
    i.e., the critical points in $M_{d-1}$; (b) The significance of
    each modal cluster is assessed by truncating its bump at the
    highest point on its boundary (a local minimum), and cluster $C$
    is found to be the least significant; (c) the bump of $C$ is
    truncated and $C$ is merged into $B$, and cluster $A$ now becomes
    the least significant of the remaining three and is to be merged
    into $B$; (d) $D$ is the less significant between the two and is
    to be merged into $B$ (actually $A \cup B \cup C$).}
  \label{fig:mmexample}
\end{figure}

Fig.~\ref{fig:mmexample} gives a simple, univariate example to
illustrate how the proposed method works. The density function $f$
shown in Fig.~\ref{fig:mmexample}(a) has four modal clusters $A$, $B$,
$C$ and $D$, as separated by three local minima shown with vertical
dashed lines. The significance of each modal cluster is then assessed
in terms of its $\Delta$-value if its bump is truncated at the level
of the highest density value on its boundary, as shown by each
horizontal dashed line in Fig.~\ref{fig:mmexample}(b).  The first
least significant cluster is found to be $C$. Truncating its shaded
area yields the smallest KL divergence compared to truncating each of
the other three clusters above their respective horizontal dashed
lines. After merging it into $B$, there remain three clusters, $A$,
$B \cup C$ and $D$, as shown in Fig.~\ref{fig:mmexample}(c), where the
original bump of $C$ has been truncated at the height of
assessment. The next least significant cluster is $A$. Truncating its
shaded area yields a smaller KL divergence than truncating the bump of
either remaining cluster ($B \cup C$ or $D$); consequently, cluster
$A$ is merged into $B \cup C$. Of the two remaining clusters, $D$ is
the less significant, as shown Fig.~\ref{fig:mmexample}(d), and
merging it into $A \cup B \cup C$ results in one final cluster,
covering the entire space of $\Real$. Note that at any stage of the
process, the significance of a cluster is assessed at the critical
point (of Morse index $d-1$) that has the highest density value on its
boundary. For each cluster, we can easily arrange these critical
points in the decreasing order of their density values and only
consider the very first one for potential truncation of the bump of
the cluster.

We name this cluster merging method the \textit{modal cluster merger}
(MCM). It is summarized step by step in Algorithm~\ref{alg:mcm}. We
can use a set of the indices of the modes to denote a cluster here,
and a clustering corresponds to a partition of the set
$\{1, \dots, m\}$, $m$ being the number of modes. The initial
clustering thus corresponds to the partition
$\Ical^{(m)} = \{I_1^{(m)}, \dots, I_m^{(m)}\}$, where
$I_i^{(m)} = \{i\}$ for $i = 1, \dots, m$. At the $k$-cluster level,
the partition is thus $\Ical^{(k)} = \{I_1^{(k)}, \dots, I_k^{(k)}\}$,
where each $I_i^{(k)}$ is a subset of $\{1, \dots, m\}$ and indicates
which initial modal clusters comprise the $i$th cluster at this
cluster level.

\begin{algorithm*}[!tbh]
  \caption{Modal Cluster Merger (MCM)}
  \label{alg:mcm}
  \begin{algorithmic}[1]
    \REQUIRE Density function $f$, with $m$ modes. 
    
    \STATE Set $k = m$
    
    \REPEAT  
    
    \STATE For each $I_i \in \Ical^{(k)}$, compute
    $\Delta_i^{(k)} = \Delta(\ft_i^{(k)}, f)$, evaluated by truncating
    its bump above the density height at its potential truncating
    point $x_i^{*(k)} \in M_{d-1}$.

    \STATE Find $I_{\bar i} \in \Ical^{(k)}$ such that
    $\bar{i} = \argmin_{i \in \{1, \dots, k\}} \Delta_i^{(k)}$.
    
    \STATE Find cluster $I_j \in \Ical^{(k)}$ (or the most significant
    cluster, if multiple) that is adjacent to cluster $I_{\bar i}$ at
    the merging critical point $x_{\bar i}^{*(k)}$.
    
    \STATE Replace $I_j$ with $I_j \cup I_{\bar i}$ and set 
    $\Ical^{(k-1)} = \Ical^{(k)} \backslash \{I_{\bar i}\}$

    \STATE Set $k = k - 1$
    
    \UNTIL {$k = 1$}
    
    \RETURN $\Ical^{(m)}, \ldots, \Ical^{(1)}$
  \end{algorithmic}
\end{algorithm*}

\section{Some Implementation Issues}
\label{sec:implementation}

This section details the implementation of Algorithm~\ref{alg:mcm},
with a specific focus on processing random samples. We outline and
discuss the key steps of this process, noting that alternative
methodologies and further optimizations are certainly
possible. Section~\ref{sec:density-estimation} describes a
semiparametric mixture approach to density
estimation. Section~\ref{sec:mode-identification} describes how to
identify the modes of the density estimate and at the same time
associate each observation with a
mode. Section~\ref{sec:critical-points} investigates how to compute
the critical points in $M_{d-1}$ of a density function. In
Section~\ref{sec:adjacency}, we present a practical, efficient
approach to identifying cluster adjacency. A fast way to estimate the
KL divergence between the original and a truncated density function is
described in Section~\ref{sec:clustersignificance}. Finally, time
complexity of the algorithm is analyzed in
Section~\ref{sec:complexity}.

\subsection{Density estimation}
\label{sec:density-estimation}

Density estimation is critically important for pursuing the ideal
population goal and a good density estimator is certainly helpful. To
implement the proposed clustering method, we use a semiparametric
mixture-based density estimator which was developed by
\citet{wang-wang-2015} and further modified by
\citet{hu-wang-2021}. The density estimator has been implemented for
solving clustering problems and has been shown to have fairly good
performance \citep{hu-wang-2021}. For self-containment purposes, we
briefly describe it as follows.

Given a dataset $\{x_i \in \mathbb{R}^d: i = 1, \ldots , n\}$, it
estimates a density function using a semiparametric Gaussian mixture:
\begin{align}
  \label{eqn:semiparametric-mixture}
  f(x; G, \Sigma) = \int \phi(x; \theta, \Sigma) \ud G(\theta),
\end{align} 
where $x$, $\theta$ $\in$ $\mathbb{R}^d$, $\phi (x; \theta, \Sigma)$
denotes the multivariate normal density with mean vector $\theta$ and
diagonal covariance matrix $\Sigma$. The mixing distribution function
$G$ is of a completely unspecified form by assumption, but it is known
that it always has a discrete maximum likelihood estimator and can
thus be exchanged with its support point set
$\Theta = (\theta_1, \dots, \theta_l)$ and the associated probability
mass vector $\omega = (\omega_1, \dots, \omega_l)$. Hence, density
\eqref{eqn:semiparametric-mixture} also has the following
homoscedastic finite mixture form:
\begin{align}
  \label{eqn:finite-mixture}
  f(x; G, \Sigma) = f(x; \Theta, \omega, \Sigma) =
  \sum_{j=1}^l \omega_j \phi(x; \theta_j, \Sigma),
\end{align}
though the number of components $m$ is unknown here and needs to be
determined via computation. The mixture is semiparametric, because $G$
is an infinite-dimensional parameter and $\Sigma$ is
finite-dimensional.

The log-likelihood function is given by
$l(G, \Sigma) = \sum_{i=1}^n \log f(x_i; G, \Sigma)$. However, its
direct maximization over both $G$ and $\Sigma$ is inappropriate, as it
always leads to degenerate $\Sigma$, with the likelihood value
approaching $\infty$ \citep{grenander-1981,geman-hwang-1982}. To
resolve this issue, one can consider the decomposition
$\Sigma = h^2 B$, where $B$ is symmetric, positive-definite and
subject to $|B| = 1$, with which the likelihood function
$\ell_h(G, B) = \ell(G, h^2 B)$ is bounded and can be maximized
accordingly, if $h > 0$ is held fixed.  The value of $h$ controls the
smoothness of the resulting density estimate
$f(\cdot; \Gh_h, h^2 \Bh_h)$, where
$(\Gh_h, \Bh_h) = \argmax \ell_h(G, B)$. An appropriate value of $h$
can then be determined using, e.g., a model selection criterion, such
as the corrected Akaike information criterion (AIC$_c$)
\citep{sugiura-1978,hurvich-tsai-1989}, which minimizes
\begin{align*}
  \mathrm{AIC}_c(h) = - 2 \ell_h(\Gh_h, \Bh_h) + 2 p + \frac {2 p (p +
  1)} {(n - p - 1)_+},
\end{align*}
where $p$ is the number of free parameter in a model and $a_+$ denotes
the positive part of $a$. For the semiparametric mixture, we use
$p = l (d + 1) + d - 1$, where $l$ is the number of mixture
components. In general, the AIC$_c$ has a fairly similar performance
to the AIC, unless $n$ is small or $p$ is large. With the
semiparametric mixture used for density estimation, $p$ can become
exceptionally large. Consequently, the AIC$_c$ serves as a critical
safeguard against extreme undersmoothing, preventing the selection of
an excessively small bandwidth ($h$-value).

For each fixed value of $h$, the likelihood can be maximized by a
combination of the constrained Newton method (CNM) \citep{wang-2007}
and the Expectation-Maximization (EM) algorithm
\citep{dempster-laird-rubin-1977}. The CNM is used to deal with the
nonparametric $G$ and the EM to update $\omega$, $\theta$ and $B$. We
also adopt a diagonal $\Sigma$ so that a multivariate normal component
can be decomposed into the product of univariate normals, thus giving
a computational advantage, especially in a higher-dimensional
situation. For more details, see \citet{wang-wang-2015} and
\citet{hu-wang-2021}.

Almost surely, the maximum likelihood estimator of the semiparametric
mixture density \eqref{eqn:finite-mixture}, with $h>0$ fixed, derived
from a random sample is a Morse function. This follows from the fact
that the log-likelihood function is continuously differentiable and
has a unique maximizer (for sufficiently many linearly independent
data points), combined with a parameter space that can be readily
compactified \citep{kiefer-wolfowitz-1956}. Because the parameter
estimates solve the corresponding likelihood equations, the Implicit
Function Theorem ensures that the maximum likelihood estimator is a
continuous function of the sample data. Consequently, the probability
that a density estimate possesses any critical point with a degenerate
Hessian matrix is zero.

The maximum likelihood estimator of the semiparametric mixture density
also has a finite number of components and thus a finite number of
modes \citep{lindsay-1983,alexandrovich-holzmann-ray-2013}.
Consequently, Assumption (A) holds almost surely for our density
estimator given a random sample. While the proposed clustering method
is suitable for any problem where density estimation is appropriate,
tuning $h$ further provides a dense sieve of mixture densities capable
of approximating any arbitrary target density.

\subsection{Mode identification and observation association}
\label{sec:mode-identification}

To use the MCM, we first need to identify all critical points in
$M_d$, i.e., the modes. By definition, a point is associated with a
mode by climbing up its flow line. With numerical data, this means one
should use the steepest ascent method, with an infinitesimal step
size, but this is impractical. One can practically use a larger step
size, but it is not so easy for the user to choose an appropriate step
size. Choosing it too large, it ends up in a wrong modal cluster
and choosing it too small, it takes a long time to reach a mode. 

Therefore, for this purpose, we resort to the Modal EM (MEM) algorithm
\citep{li-ray-lindsay-2007}, possibly with a partial step size. Since
each observation also needs to be associated with a mode, we can start
the algorithm with each observation, which will produce both the modes
and the modal clusters of the observations.

The algorithm is fairly straightforward for a Gaussian mixture. For
the homoscedastic mixture \eqref{eqn:finite-mixture}, the iterative
formula for updating $x$ to $x'$ is
\begin{align}
  \label{eqn:mem}
  x' = (1 - \alpha) x + \alpha \sum_{j=1}^l p_j \theta_j,
\end{align}
for a step size $0 < \alpha \leq 1$, where, denoting
$f(x) = f(x; G, \Sigma)$,
\[
  p_j = \frac {\omega_j \phi(x; \theta_j, \Sigma)} {f(x)}, ~~ j = 1,
  \dots, l.
\]
Owing to its minorization property \citep{lange-2013}, the MEM
algorithm is particularly effective for associating an observation
with a nearby mode. For points far from the mode, a partial step size
may be necessary, especially during the initial iterations. While the
ascent path generated by the MEM algorithm differs slightly from that
of the steepest ascent method, the resulting modal clusters are
practically identical. We provide a brief discussion of this behavior
in Section 7. A major advantage of the MEM algorithm over steepest
ascent is its automatic step size ($\alpha = 1$), which performs
exceptionally well provided the starting point is not too far away
from the mode.

The set of unique modes for the initial density estimate is then
determined by the destinations reached from all observations via the
MEM. Each observation is then also associated with the mode it
reaches. If a modal cluster is to be merged into another one, then all
observations associated with the first are to be merged into the
second.

In rare cases, there may be modes not reached by any observation. This
causes no practical issue for the MCM.  Since there is no observation
associated, these modes are minor. If one finds such modes in any way,
they will be the first ones absorbed into their neighbouring clusters
by the practical cluster merging method described in
Section~\ref{sec:clustersignificance}.

\subsection{Finding critical points in $M_{d-1}$}
\label{sec:critical-points}

To find a non-mode critical point, we consider using the Newton-Raphson
(NR) method that iteratively updates $x$ with the following formula:
\begin{align}
  \label{enq:nr-formula}
  x' = x - \alpha [\nabla^2 f(x)]^{-1} \nabla f(x),
\end{align}
where $ 0 < \alpha \leq 1$ is a step size, and $\nabla$ and $\nabla^2$
denote, respectively, the gradient and Hessian operators with respect
to $x$. For a density $f$, it is often computationally more efficient
using $\log f$ in \eqref{enq:nr-formula}, instead of $f$, and it is
also numerically more stable. Note that $\log f$ and $f$ have the same
critical points and their Morse indices.

The NR method with a full step size ($\alpha = 1$) converges
automatically in a sufficiently small neighbourhood of a critical
point with a non-singular Hessian, a property satisfied by a Morse
function. The NR method has a quadratic order of convergence and often
finds a nearby critical point in a few iterations. To find a critical
point in $M_{d-1}$, we modify the NR method and adopt a Monte Carlo
approach to starting it. The NR method starts with a random point and
always examines $\lambda(x)$, the number of the negative eigenvalues
of the Hessian at point $x$. The computation is aborted in its first
iteration, if $\lambda(x) \ne d-1$. Otherwise, it continues and moves
forward inside the region with $\lambda(x) = d-1$. If an iterate steps
out of this region, the step size is halved repeatedly until the
property $\lambda(x) = d-1$ is satisfied. The iteration continues
until a critical point is found, or is aborted if a pre-fixed maximum
number of iterations (10, say) is reached. The iteration can also be
aborted, should any numerical difficulty arises.

To generate random starting points, we initially considered using the
estimated mixture density, possibly with a scaled-up $\Sigma$. While
it seems nice that these points are distributed in the area where the
critical points are, it did not work well. The problem is that it has
a high probability to locate critical points in high density regions
but a low probability to locate those in low density regions. As a
result, the NR method needs to run a large number of times to find all
critical points in $M_{d-1}$ when there are critical points in low
density regions. Because each critical point requires only a single
visit and the objective is to minimize total collection time, the
optimal strategy assumes a uniform spatial distribution. Consequently,
we employ a uniform distribution on a support containing the critical
points. In a low-dimensional space, it seems convenient enough to use
a cube as the support of the uniform distribution, which can be
determined easily from the range of data or the mixture component
centers. In a high-dimensional space, the convex hull of the mixture
component centers, possibly expanded slightly from its gravity center,
is used as the support. Note that all critical points of a
homoscedastic mixture density must be inside this convex hull
\citep{ray-lindsay-2005}. This strategy works quite well in all of our
numerical studies, although further improvements are highly
possible. In addition, if one is more worried about the risk of not
finding those critical points in high density areas, it is then
possible to use the mixture density estimate to generate a portion of
the starting values.

The repeated runs of the NR method can be terminated once all critical
points have presumably been identified. To ensure completeness, one
must verify that all modes are mutually connected via these discovered
critical points. Furthermore, a sufficient number of additional runs
should be performed after locating each critical point, and a minimum
threshold of total repetitions should better be enforced in every
scenario.

As an illustration, we ran our R implementation of this strategy for a
fixed 1000 repetitions of the NR method, for the Quadrimodal
distribution shown in Fig.~\ref{fig:bound}, using the uniform
distribution on $[-2,2]^2$ to generate random starting points. The
resulting frequencies for the four saddle points (numbered in
ascending order of $x_1$) and the running time are listed in the left
half of Table~\ref{tab:freq}. Since each NR run is started with an
independent random point, it should take at most dozens of runs to
have a full collection of the saddle points for such a
distribution. The probability of locating a specific saddle point
equals the ratio of its basin of attraction's measure (area or volume)
to the total support of the uniform distribution.

\begin{table}[!tbh]
  \centering
  \begin{tabular}{ccccccc c ccccccc} \hline
    \multicolumn{7}{c}{Saddle Points} && \multicolumn{7}{c}{Modes} \\
    \cmidrule{1-7} \cmidrule{9-15}
    S1 & S2 & S3 & S4 & Aborted && Time (s) && M1 & M2 & M3 & M4 & Aborted && Time (s) \\ 
    64 & 81 & 122 & 69 & 664 && 1.28 && 211 & 128 & 113 & 200 & 348 && 1.16 \\ \hline
  \end{tabular}
  \caption{Frequencies of the saddle points and modes found by the
    modified NR method for the Quadrimodal distribution}
  \label{tab:freq}
\end{table}

The same strategy can also be used to find the modes of the density
function, by ensuring $\lambda(x) = d$, or in fact critical points of
any Morse index. Finding the modes this way is faster if the data
dimension is not very high, and it only involves the estimated
density, not the observations at all. Once the modes are identified,
observation association using the MEM becomes more straightforward, as
an association can be established as soon as the MEM iterate
approaches a mode, without needing to reach it. The resulting
frequencies for the four modes out of 1000 NR runs are listed in the
right half of Table~\ref{tab:freq}. Certainly, to find both saddle
points and modes together, an NR run should not be aborted unless the
collection of the critical points of the corresponding Morse index is
deemed complete.

\subsection{Establishing cluster adjacency}
\label{sec:adjacency}

Once the necessary critical points are identified, we need to
establish pairwise cluster adjacency, by identifying all clusters that
share a critical point in $M_{d-1}$ on their boundaries. This can be
achieved by generating points in a sufficiently small ball centred at
the critical point and identifying the modes that are the destinations
of the flow (or MEM) lines of these points. Points can be generated
either systematically or randomly; however, we adopt a random
generation approach in our implementation. To maximize the success
rate of locating distinct modes, we consistently employ pairs of
reflection points symmetric across the critical point.

Since no found critical point can be numerically perfect, using too
small a ball centered at a critical point for random point generation
can result in the generated points all climbing up to an identical
mode, and using too large a ball may cause them going to wrong modes
at a distance. To resolve this issue, we start with a sufficiently
small ball and then gradually double the radius of the ball, until for
the first time the generated points reach distinct modes.

After identifying the adjacency among the clusters, the merging
process can proceed easily. One only needs to compute the density
values at the critical points in $M_{d-1}$. For each cluster, its
density truncation can only occur in the top-down order in terms of
the height of the density at these critical points on its boundary.

\subsection{Assessing cluster significance}
\label{sec:clustersignificance}

As stated in Section~\ref{sec:method}, the KL divergence is an
efficient way to measure the difference between two (sub-)density
functions. To avoid numerical integration over an irregular region, we
can conveniently find their approximate values from the given data.

Consider a consistent density estimator $\fh$ based on a random sample
$X = \{x_1, \dots, x_n\}$. The KL divergence
$\Delta(\ft_i^{(k)}, \fh)$ can be well approximated by
\begin{align}
  \widehat{\Delta}(\ft_i^{(k)}, \fh)
  & = \frac 1 n \sum_{x \in X} \log \left[\frac {\fh(x)} {\ft_i^{(k)}(x)}
    \right] \nonumber \\
  & =  \widehat{\Delta}(\ft^{(k)}, \fh)
    + \frac 1 n \sum_{x \in C_i^{(k)}}\log\left[\frac {\ft^{(k)}(x)}
    {\ft_i^{(k)}(x)}\right],
    \label{eqn:delta-sample}
\end{align}
where $\ft_i^{(k)}$ is as given in (\ref{eqn:ft}), with $f$ replaced
by $\fh$. This empirical version of the KL divergence,
$\widehat{\Delta}(\ft_i^{(k)}, \fh)$, has a sound statistical basis
for assessing the significance of cluster $C_i^{(k)}$, as
$n \widehat{\Delta}(\ft_i^{(k)}, \fh)$ is exactly the log-likelihood
ratio between $\fh$ and $\ft_i^{(k)}$. Note that the summation in
\eqref{eqn:delta-sample} only needs to be evaluated over the data
points in $C_i^{(k)}$ that have their density values truncated.

Once cluster $C_{\bar i}^{(k)}$ is chosen and merged away, set
$\ft^{(k-1)} = \ft_{\bar i}^{(k)}$. The empirical KL divergence
$\widehat{\Delta}(\ft^{(k)}, \fh)$ increases as $k$ decreases, and
$n \widehat{\Delta}(\ft^{(k)}, \fh)$ can be interpreted as the
log-likelihood loss of $\ft^{(k)}$ relative to $\fh$.

\subsection{Time complexity}
\label{sec:complexity}

The computational cost of the algorithm lies dominantly on density
estimation and critical point finding. Density evaluation at all data
points requires $O(ndl)$ basic operations, where $n$ denotes the
sample size, $d$ the dimension of the data and $l$ the number of
mixture components required. This means that mode identification and
observation association, which uses the MEM algorithm, requires
$O(ndl)$ operations per iteration. For density estimation, the CNM
further requires $O(nl^2)$ operations for updating the mixing
proportions $\omega_j$'s in each iteration.  It almost always
terminates within dozens of iterations.

For the modified NR method used for finding critical points in
$M_{d-1}$, inverting a Hessian matrix requires $O(d^3)$ operations. To
find all $k$ critical points in $M_{d-1}$, it thus requires $O(kd^3)$
in probability, if each critical point has a positive probability of
being located, and the number of repeated runs may be up to a few
hundreds, depending on the size of $M_{d-1}$. This is a daunting task
if $d$ is large.

Numerous algorithms exist in the literature that specifically target
non-mode critical points; see, e.g., \cite{conn-gould-toint-1991},
\cite{weinan-zhou-2011}, \cite{asl-lu-yang-2022} and
\cite{liu-luo-2022}. These algorithms require only $O(d^2)$ operations
per iteration. We will evaluate their applicability to our current
problem in future studies.

The numerical performance of the current implementation of MCM is
investigated and discussed in Section~\ref{sec:numeric}, with running
times given in Section~\ref{sec:conclusion}.

\section{Numerical Studies}
\label{sec:numeric}

\subsection{Setup}

In this section, we study the performance of the proposed clustering
method and compare it with several other clustering methods, including
K-means, Spectral clustering (SpecClust), kernel K-means (KK-means),
DBSCAN, PdfCluster, HMAC and MDE-MF. The inclusion of K-means here is
due to its simplicity and popularity as a clustering method, although
it is not a mode-based but a distance-based one. SpecClust and
KK-means are two kernel-based methods and can possibly find non-convex
clusters. DBSCAN is a well-known density-based clustering algorithm
and excels at discovering complex, irregular shapes while filtering
out background noise. PdfCluster, HMAC and MDE-MF all aim to find
modal clusters directly.

All computations have been carried out in R \citep{citeR}. For
K-means, SpecClust and KK-means, DBSCAN, PdfCluster and HMAC, the
following R functions are used, respectively: \texttt{stats::kmeans},
\texttt{kernlab::specc}, \texttt{kernlab::kkmeans},
\texttt{dbscan::dbscan}, \texttt{pdfCluster::pdfCluster} and
\texttt{Modalclust::phmac}
\citep{karatzoglou-smola-hornik-2024,hahsler-piekenbrock-2025,azzalini-menardi-2014,cheng-ray-2014}. For
all problems studied below, we largely focus on the results with a
pre-fixed number of clusters. If a function returns a hierarchical
tree, we use the major clusters with the desired number of clusters by
cutting the tree. For \texttt{dbscan}, we find the clusters of the
desired number by controlling the radius of the epsilon neighborhood
(\texttt{eps}). The two practical datasets studied in
Sections~\ref{sec:simul4} and \ref{sec:simul5} are standardized so
that each variable has mean $0$ and variance $1$, as often preferred by
a clustering algorithm in practice. We note that the performance of
MCM and MDE-MF is invariant of such data transformation, owing to the
mixture-based density estimator they use.

Wherever possible, the adjusted Rand index (ARI)
\citep{rand-1971,hubert-arabie-1985} is used to measure the similarity
between the partitioning result of a clustering method and the true
partition (or class labels if available). The ARI value ranges between
$-1$ and $1$. A larger positive ARI value indicates a higher level of
similarity between the two partitions, while a value of $0$ is
equivalent to completely random allocations and a negative value
indicates agreement worse than random allocations.

We prefer to use the default settings of these R functions to avoid
unnecessary complications. However, we notice that \texttt{dbscan} has
an argument \texttt{minPts} (defaulted to $5$, for the number of
minimum points required in the epsilon neighborhood for core points)
and the value it takes can substantially affect the clustering
result. We hence set it to a value in each case so as to achieve the
optimal or near-optimal clustering outcome, in terms of the resulting
ARI values or visual inspection. The DBSCAN method is thus denoted
along with the tuned value of \texttt{minPts}, as
DBSCAN(\texttt{minPts}). We notice that \texttt{dbscan} tends to crash
if a larger value for \texttt{minPts} ($> 20$, say) is used. It is
often quite a challenge in practice to find suitable values for such
hyperparameters of a clustering algorithm, as rarely there is an
independent, objective criterion for evaluating a clustering method,
and it is also not easy to visualize the clustering results even in a
moderate-dimensional space. By default, \texttt{specc} and
\texttt{kkmeans} automatically find a value for the width parameter
(\texttt{sigma}) of the radial basis function kernel. This seems to
generally work fine, although using some fixed values for
\texttt{sigma} may result in higher ARI values in some cases. Both
\texttt{specc} and \texttt{kkmeans} are a little unstable and, if
repeated, may not produce the same clustering results for the same
data. We hence report their average performance wherever possible.

\subsection{Data from a quadrimodal distribution}
\label{sec:simul1}

The first case study is a simulation using the Quadrimodal
distribution shown in Fig.~\ref{fig:bound}; see \cite{wand-jones-1993}
for the parameters of this distribution. This seems to be a fairly
typical clustering scenario. In each simulation run, $1000$
observations are randomly generated from the distribution.  The true
partition of the generated observations is obtained by applying the
MEM algorithm using the true density. Each clustering method is then
applied to the generated data and the resulting partition at the
$4$-cluster level is compared against the true partition, as the true
density has $4$ modes.

Using a generic random sample of this distribution,
Fig.~\ref{fig:simu1_result} shows the true partition and the
partitions produced by the 8 clustering methods included in our
study. It can be seen that the allocations of the observations by MCM
and MDE-MF are the most similar to the true partition, followed by
K-means and SpecClust. KK-means and PdfCluster find 4 clusters of
reasonable sizes but they dot not match well the true partition. Due
to kernel transformation, SpecClust and KK-means can group widely
separated data points into the same cluster, which may produce
counterintuitive results. HMAC has found 3 relatively large clusters,
but the fourth one contains only 1 outlying data point. Despite the
best choice for \texttt{minPts}, DBSCAN still fails to effectively
isolate modal clusters when data points are dense in between. It also
labels a number of data points as noise (light grey points).

\begin{figure}[!tbh]
  \centering
  \includegraphics[width=0.9\linewidth]{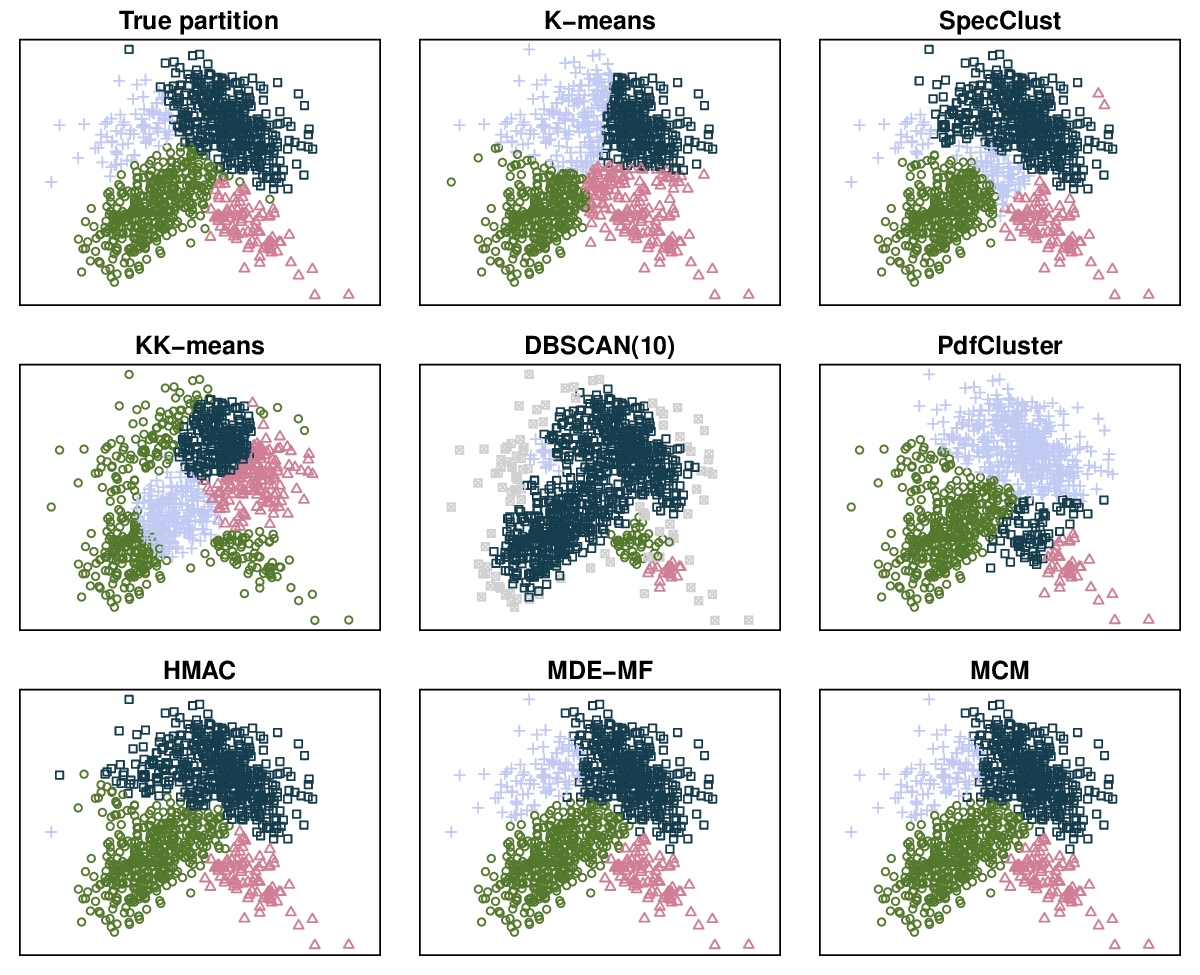}
  \caption{Results at the $4$-cluster level in the scenario presented in 
    Section~\ref{sec:simul1}}
  \label{fig:simu1_result}
\end{figure}

The simulation is repeated $50$ times and the average ARI values of
the clustering methods are given in Table~\ref{tab:simul1rand}.  The
proposed MCM has the highest ARI value (shown in bold, also in
Tables~\ref{tab:simul2rand}--\ref{tab:simul5result}), which is closely
followed by MDE-MF. All other methods have apparently smaller ARI
values, with DBSCAN giving the lowest.

\begin{table}[!tbh]
  \centering
  \begin{tabular}{cccccccc} \hline		
    K-means & SpecClust & KK-means & DBSCAN(10) & PdfCluster & HMAC & MDE-MF & MCM \\
    \hline
    $0.659_{0.005}$ & $0.701_{0.010}$ & $0.471_{0.012}$ & $0.301_{0.032}$ & $0.796_{0.021}$ & $0.734_{0.036}$ & $0.927_{0.005}$ & $\mathbf{0.931}_{0.006}$
    \\ \hline
  \end{tabular}
  \caption{Mean ARI values at the $4$-cluster level in the scenario
    presented in Section~\ref{sec:simul1}.  Standard errors are given
    in subscripts. Default argument settings are used for SpecClust
    and KK-means. Out of all feasible values, \texttt{minPts=10} gives
    the largest mean ARI values for DBSCAN.}
  \label{tab:simul1rand}
\end{table}

\subsection{Data structured in three concentric circles}
\label{sec:simul2}

The second scenario has a structure of $3$ concentric circles, similar
to the example given by \citet{hastie-tibshirani-friedman-2009}, page
546, where they used the special technique of spectral clustering to
solve this problem. Here, we generate the observations of the three
groups uniformly distributed on three circles with radii of $1$, $2.8$
and $5$ respectively, and then add to each observation a bivariate
Gaussian fluctuation with mean $0$ and standard deviation of $0.25$ in
any direction. The numbers of observations on the inner, middle and
outer circles are $150$, $200$ and $250$, respectively. Ideally, a
modal clustering algorithm should produce three circular clusters,
corresponding to the three uniform distributions on the circles.

Fig.~\ref{fig:simu2_bound}(a) shows a generic dataset (circles) and a
contour plot of the density estimate produced by the algorithm
described in Section~\ref{sec:density-estimation}. The estimated
density has 39 modes (solid points) and 62 saddle points (triangles),
which are all connected in one network (line segments) according to
modal cluster adjacency established by the methods described in
Sections~\ref{sec:mode-identification} and \ref{sec:adjacency}.
Fig.~\ref{fig:simu2_bound}(b) presents the 39 initial modal clusters,
arbitrarily labeled at their modes. Adjacent clusters are separated by
solid curves, with one saddle point residing on a border.  It is
interesting to note that only a minimum (not shown) can be bordering
more than one cluster.

\begin{figure}[!tbh]  
  \centering
  \includegraphics[width=0.95\linewidth]{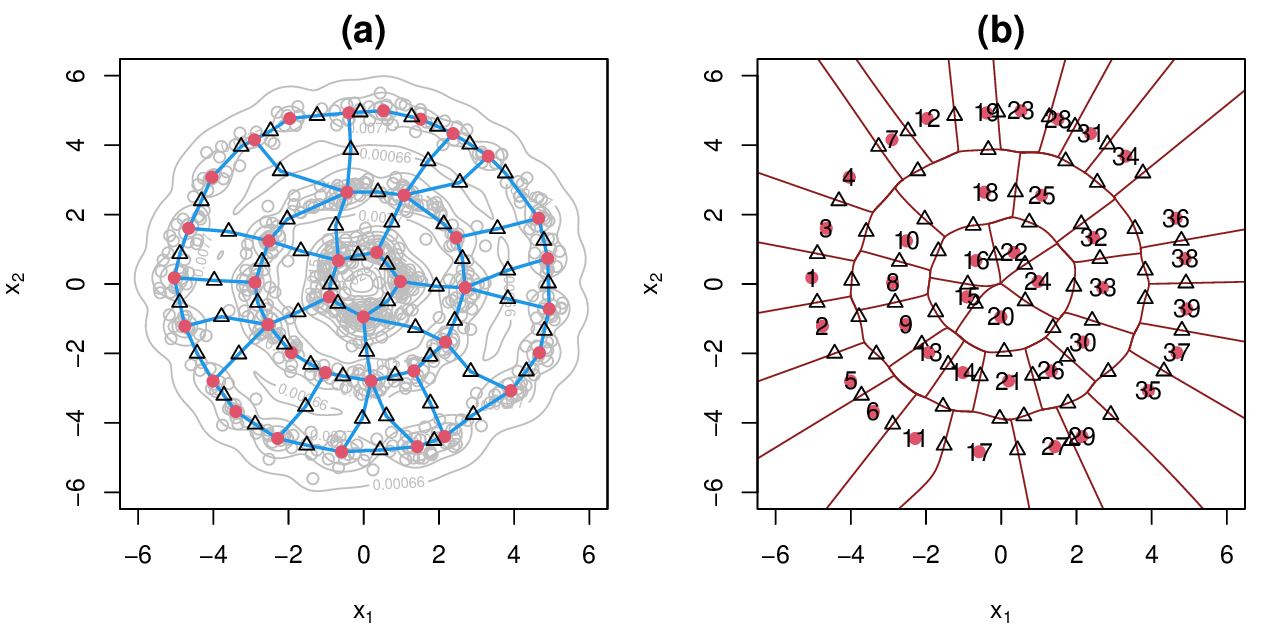}
  \caption{(a) The contour plot of the density estimate produced by
    MCM in the scenario presented in Section~\ref{sec:simul2},
    superimposing 600 generic data points (circles), along with the
    established network connecting all modes (solid points) and saddle
    points (triangles); (b) Initial modal clusters, arbitrarily
    labelled, in the scenario presented in Section~\ref{sec:simul2}.}
  \label{fig:simu2_bound}
\end{figure}

This is a very challenging problem for modal clustering, as an
estimated density can easily contain many minor modes along each
circle. The nonparametric density estimate here looks pretty good
already, as shown in Fig.~\ref{fig:simu2_bound}(a). Gradually
increasing the bandwidth---either globally or locally---progressively
distorts the density estimate. This distortion can raise the valleys
between the three large circular clusters too high at some location
before all minor clusters along a circle merge into one.

Fig.~\ref{fig:simu2_others} shows the results of the competing
clustering methods at the $3$-cluster level on the generic
dataset. Not surprisingly, $k$-means is not able to deal with this
scenario because the true centres of the three clusters are
identical. PdfCluster and HMAC also perform poorly as increasing the
global bandwidth value severely distorts the original density
estimate. MDE-MF flattens the density estimate locally and it does not
perform well either in this case, because it merges minor clusters on
different circles too early. In aid of kernel transformation, KK-means
still does not perform well. If the tuning parameters are carefully
chosen for the generic dataset here with the clustering goal in mind,
SpecClust and DBSCAN can perfectly produce the three circular
clusters.

\begin{figure}[!tbh]
  \centering
  \includegraphics[width=0.95\linewidth]{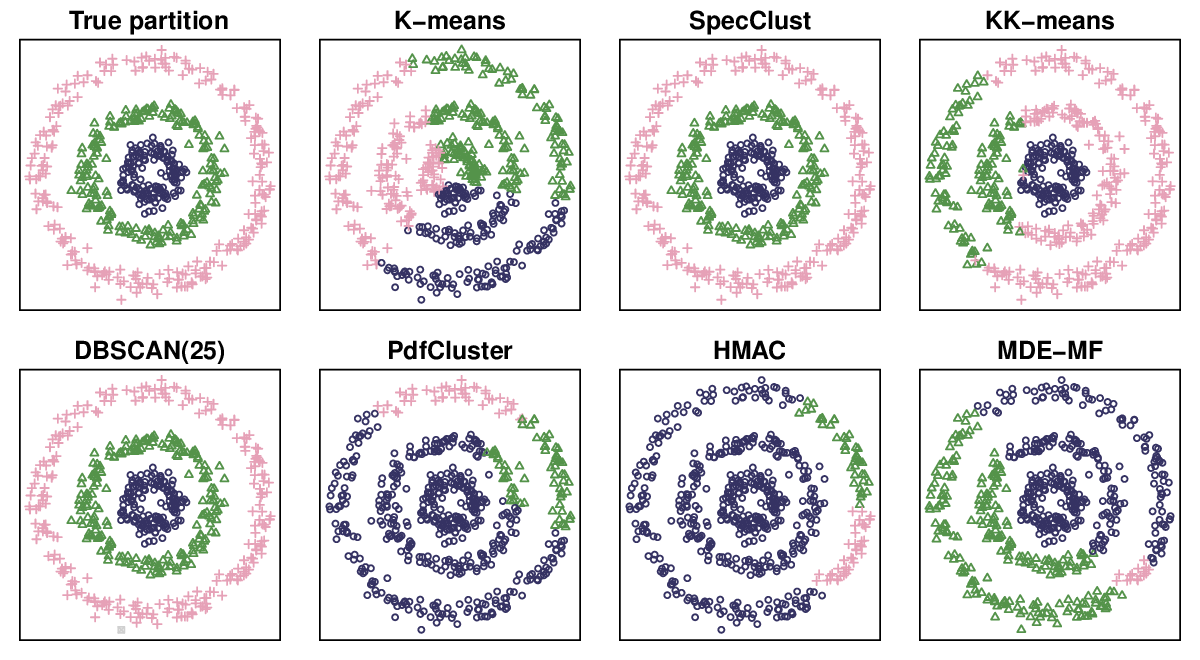}
  \caption{Results of the competing clustering methods at the
    $3$-cluster level in the scenario presented in
    Section~\ref{sec:simul2}. Tuning parameter values have been
    selected for best visual effects.}
  \label{fig:simu2_others}
\end{figure}

\begin{figure}[!tbh]  
  \centering
  \includegraphics[width=0.95\linewidth]{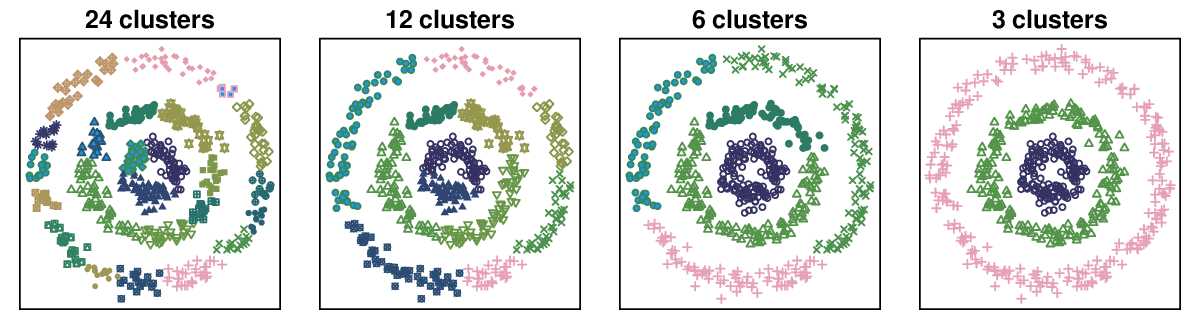}
  \caption{Results of MCM at several clustering levels in the scenario
    presented in Section~\ref{sec:simul2}}
  \label{fig:simu2_our}
\end{figure}

In comparison, Fig.~\ref{fig:simu2_our} shows the results of MCM at
the $24$-, $12$-, $6$- and $3$-cluster levels. These results at all
cluster levels are intuitively appealing---small or big gaps can be
seen between clusters and clusters with smaller gaps are merged
earlier. At the $6$-cluster level, the initial clusters on the inner
circle forms a single cluster, and the initial clusters on the middle
and outer circles are merged into several larger clusters. Finally, at
the $3$-cluster level, the observations are grouped into three
circular clusters, as desired. The reason that MCM produces such a
perfect partition is that it only truncates the minor bumps inside
each circular cluster and does not at all distort the density in the
valleys between the three large circular clusters.

The dendrogram given in Fig.~\ref{fig:simu2_dendrogram} illustrates
the complete cluster-merging process, where the branches and nodes
along the tree show how the clusters are merged in steps and the
values of the $y$-axis represent
$n \widehat{\Delta}(\ft_i^{(k)}, \fh)$, the log-likelihood loss
relative to the initial density estimate. We can see a drastic loss in
log-likelihood when the number of clusters reduces from $3$ to $2$,
suggesting that the true number of clusters is $3$.

\begin{figure}[!tbh]  
  \centering
  \includegraphics[width=0.9\linewidth]{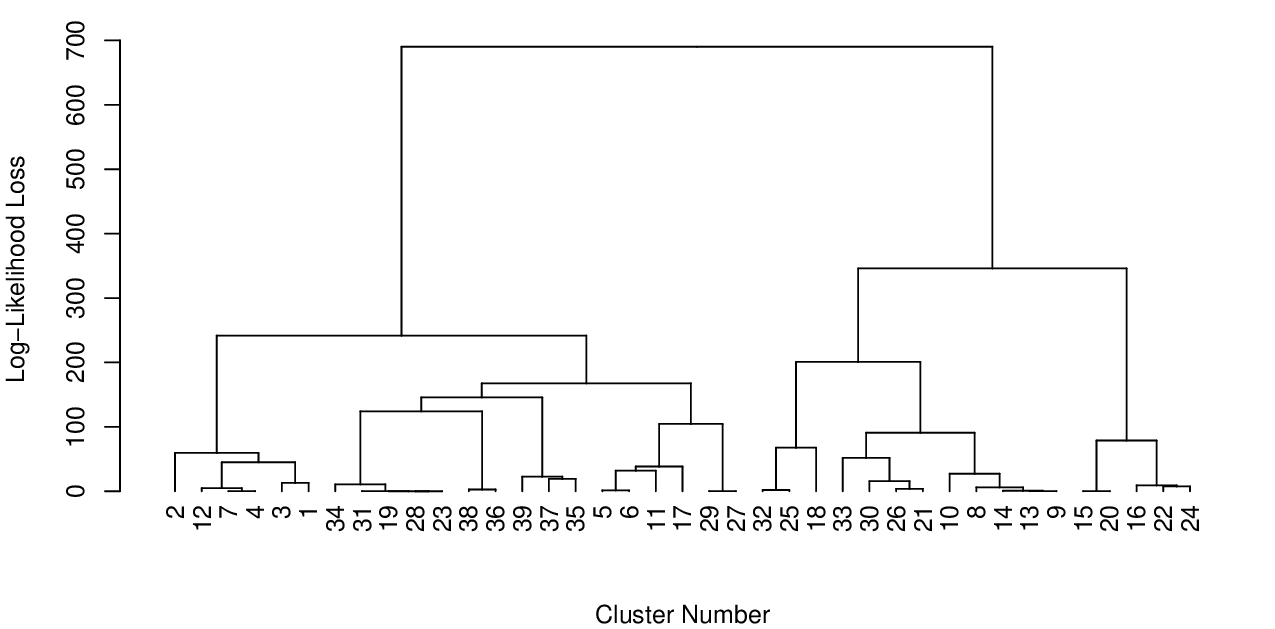}
  \caption{Dendrogram of the log-likelihood loss at each
    cluster-merging step in the scenario presented in
    Section~\ref{sec:simul2}}
  \label{fig:simu2_dendrogram}
\end{figure}

The ARI values averaged over $50$ simulation runs are given in
Table~\ref{tab:simul2rand}. Under repetitions, only MCM can reliably
partition observations into three perfect circular clusters.

\begin{table}[!tbh]
  \centering
  \begin{tabular}{cccccccc} \hline
    K-means & SpecClust & KK-means & DBSCAN(3) & PdfCluster & HMAC & MDE-MF & MCM \\ \hline
    $0.003_{0.001}$ & $0.700_{0.035}$ & $0.333_{0.016}$ & $0.723_{0.033}$ & $-0.008_{0.005}$ & $0.043_{0.019}$ & $-0.003_{0.005}$ & $\mathbf{1.000}_{0.000}$ \\ \hline
  \end{tabular}
  \caption{Mean ARI values at the 3-cluster level in the scenario
    presented in Section~\ref{sec:simul2}. Standard errors are given
    in subscripts. Default argument settings are used for SpecClust
    and KK-means. Out of all feasible values, \texttt{minPts=3} gives
    the largest mean ARI value for DBSCAN.}
  \label{tab:simul2rand}
\end{table}

\subsection{T4.8k data}
\label{sec:simul3}


In this section, we study the famous, artificial T4.8k data
\citep{karypis-han-kumar-1999}.  It contains $8,000$ observations with
two numeric variables and is shown in
Fig.~\ref{fig:simu3_result}(a). Visually, the dataset consists of six
main irregularly shaped clusters alongside scattered noise points, as
well as an additional sinusoidal structure that is virtually
impossible to detect as a distinct cluster.

\begin{figure}[!tbh]  
  \centering
  \includegraphics[width=0.95\linewidth]{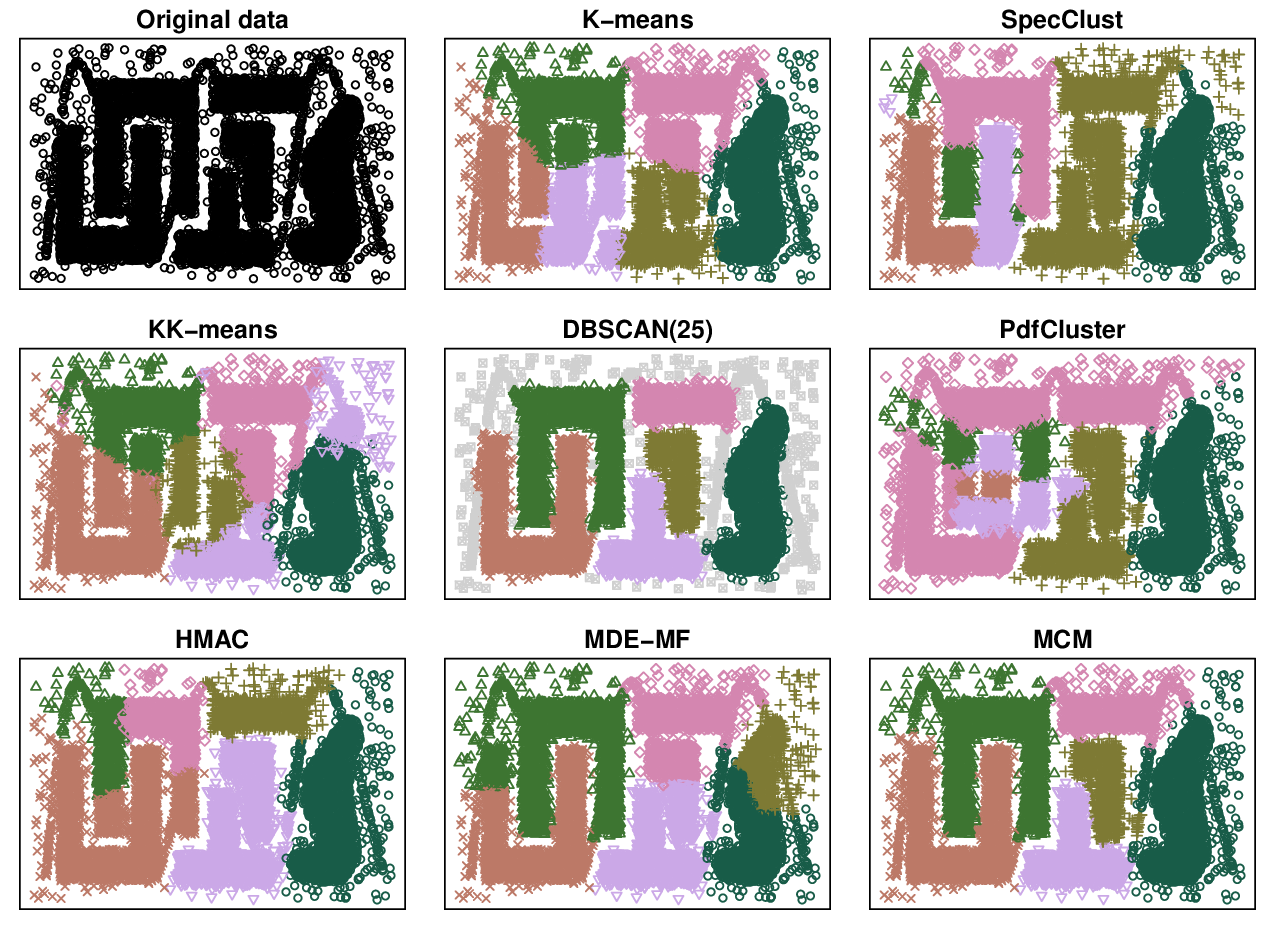}
  \caption{Results at the $6$-cluster level for T4.8k data. Tuning
    parameter values have been selected for best visual effects.}
  \label{fig:simu3_result}
\end{figure}

The results of the clustering methods at the $6$-cluster level are
given in Fig.~\ref{fig:simu3_result}. We can see that MCM successfully
identified the six major clusters, aligning well with visual
inspection. DBSCAN, with \texttt{minPts} optimally set to 25, also
produced the 6 main clusters, while labeling some edgy points as noise
(light grey points). SpecClust, MDE-MF and HMAC have identified some
main clusters of irregular shapes, while K-means, KK-means and
PdfCluster largely failed in this scenario.

\begin{figure}[!tbh]  
  \centering
  \includegraphics[width=1\linewidth]{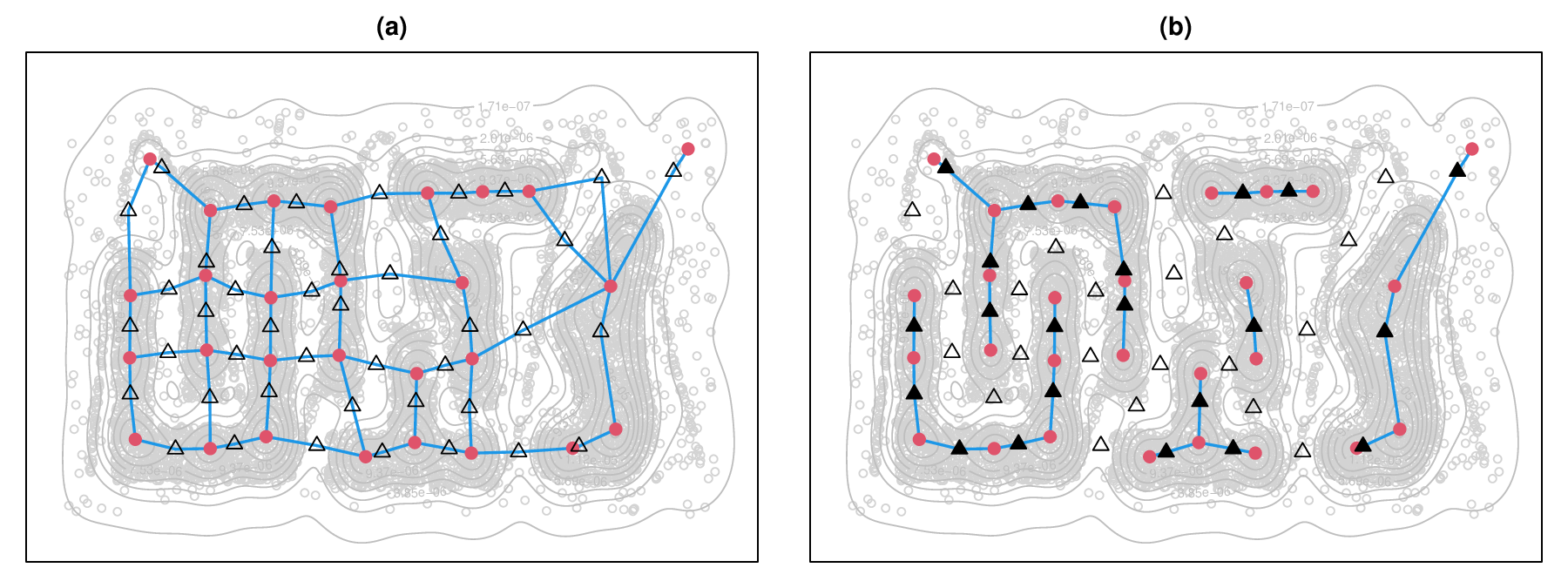}
  \caption{(a) Estimated density (contour lines) and the modal cluster
    adjacency net (line segments) for T4.8k data. The estimated
    density has 28 modes (solid points) and 43 saddle points
    (triangles). All modes are connected through the saddle points on
    the borders. (b) The 6-cluster result, yielded by MCM through
    merging initial 28 modal clusters. The saddle points used for
    merging are shown in solid triangles.}
  \label{fig:simu3-density}
\end{figure}

In some detail, MCM started with an estimated density with $28$ modes
and subsequently found $43$ saddle points by the search method
described in Section~\ref{sec:critical-points}. The mutual connection
among all modes is established after iteration 67 when the 30th saddle
point is found, and the last, 43rd one is found in iteration
360. After that, Algorithm~\ref{alg:mcm} is applied to merge every two
clusters at each step that results in the smallest loss. Unlike
DBSCAN, it does not treat any data point as noise. With MCM, these
data points form small clusters and are naturally absorbed into the
major clusters nearby. Fig.~\ref{fig:simu3-density}(a) displays the
data points, the contour lines of the estimated density, and the
initial cluster adjacency network. The darker regions formed by dense
data points align extremely well with the estimated high density
areas, both corresponding to the irregularly shaped major
clusters. This demonstrates the efficacy of density-based clustering
approaches. Fig.~\ref{fig:simu3-density}(b) shows the 6-cluster
configuration, detailing the saddle points (solid triangles) utilized
during the hierarchical merging of the initial modal clusters.

\subsection{Wireless indoor localization data} 
\label{sec:simul4}


\makebox[\linewidth][s]{The wireless indoor localization data,
  obtained from the UCI machine learning repository}
\\(\texttt{http://archive.ics.uci.edu}), contains $2,000$ observations
with $7$ numeric variables representing the signal strengths of seven
WiFi signals collected in indoor space of four rooms, and one
categorical variable denoting the room number of each record. We apply
the clustering methods to all $7$ numeric variables and study their
performance at the $4$-cluster level as compared against the known
room number information. In this scenario, the default partition
produced by PdfCluster has fewer clusters than $4$; we thus decreased
the bandwidth value of the KDE used by the clustering function to
obtain the desired number of clusters. Further, the estimated density
by MCM has $11$ modes (or critical points in $M_7$) and $13$ critical
points in $M_6$, with the last one being found by the search algorithm
in iteration $83$.

\begin{table}[!tbh]
  \centering
  \begin{tabular}{c|cccc|cccc|cccc|ccccc} \hline
    Room & \multicolumn{4}{c|}{K-means} &
                                          \multicolumn{4}{c|}{SpecClust} & \multicolumn{4}{c|}{KK-means} & \multicolumn{5}{c}{DBSCAN(20)} \\
         & 1 & 2 & 3 & 4 & 1 & 2 & 3 & 4 & 1 & 2 & 3 & 4 & 1 & 2 & 3 & 4 & 0\\
    \hline
    1 & 444 & 0 & 55 & 1 & 497 & 2 & 1 & 0   & 437 & 6 & 56 & 1& 457 & 0 & 2 & 0 & 41 \\
    2 & 8 & 447 & 45 & 0 & 0 & 498 & 2 & 0   & 0 & 472 & 28 & 0& 0 & 255 & 41 & 0 & 204 \\
    3 & 25 & 1 & 467 & 7 & 5 & 476 & 15 & 4  & 6 & 62 & 425 & 7& 3 & 0 & 433 & 2 & 62 \\
    4 & 0 & 0 & 5 & 495  & 1 & 1 & 3 & 495   & 0 & 2 & 3 & 495 & 2 & 0 & 3 & 458 & 37 \\
    \hline
    \multicolumn1l{ARI} & \multicolumn{4}{|c}{$0.816$} & \multicolumn{4}{|c}{$0.697$} & \multicolumn{4}{|c}{$0.793$} & \multicolumn{5}{|c}{ $0.721$} \\
    \hline \hline
    Room & \multicolumn{4}{c|}{PdfCluster} & \multicolumn{4}{c|}{HMAC} & \multicolumn{4}{c|}{MDE-MF} & \multicolumn{4}{c}{MCM} \\	
         & 1 & 2 & 3 & 4 & 1 & 2 & 3 & 4 & 1 & 2 & 3 & 4 & 1 & 2 & 3 & 4 \\ \hline
    1 & 495 & 3 & 0 & 2 & 498 & 1 & 0 & 1 & 465 & 0 & 32 & 3 & 499 & 0 & 1 & 0 \\
    2 & 3 & 419 & 78 & 0 & 0 & 498 & 2 & 0 & 0 & 434 & 66 & 0 & 0 & 452 & 48 & 0 \\
    3 & 10 & 481 & 0 & 9 & 5 & 489 & 0 & 6 & 1 & 0 & 495 & 4 & 4 & 0 & 490 & 6 \\ 
    4 & 1 & 2 & 0 & 497 & 1 & 1 & 0 & 498 & 0 & 0 & 6 & 494 & 2 & 0 & 0 & 498 \\ \hline
    \multicolumn1l{ARI} & \multicolumn{4}{|c}{$0.681$} & \multicolumn{4}{|c}{$0.701$} & \multicolumn{4}{|c}{$0.856$} & \multicolumn{4}{|c}{ $\mathbf{0.922}$} \\ \hline
  \end{tabular}
  \caption{Results of the clustering methods at the $4$-cluster level
    for wireless indoor localization data. Default argument settings
    are used for SpecClust and KK-means. Out of all feasible values,
    \texttt{minPts=20} gives the largest mean ARI value for DBSCAN.}
  \label{tab:simul4result}
\end{table}

For each method, a confusion table (true partition vs.\@ estimated
allocations) and the ARI values of the clustering methods at the
$4$-cluster level is given in Table~\ref{tab:simul4result}. For easy
visual inspection, the estimated clusters are re-labeled to align
reasonably well with the original room numbers.  The proposed MCM
allocates correctly the largest number of observations and therefore
has the highest ARI value. MDE-MF is not too far behind. Both
SpecClust and KK-means produce different results for different runs,
so their confusion tables are chosen from a case that that has an ARI
value closest to the averages over 50 simulation runs. With the tuning
parameter well chosen, DBSCAN has a reasonably good performance, in
terms of its identified clusters, but it labels more than 300
observations as noise (cluster 0). Both HMAC and PdfCluster have a very
small cluster containing possibly some outlying observations at the
$4$-cluster level. The clustering results that HMAC produces has one
with $4$ large clusters at the $7$-cluster level. Its ARI value is
$0.863$, which is still smaller than that of MCM. Dealing with small
clusters often requires some additional treatment in a clustering
algorithm. However, MCM handles such issues automatically, based on
its significance assessment of these clusters. The corresponding
dendrogram (omitted for brevity) further demonstrates that
partitioning the dataset into four clusters is highly appropriate.

Because MCM actually generates an explicit clustering model that
partitions the full space of $\Real^d$, it can assign new observations
to discovered clusters and predict their cluster labels. To evaluate
this predictive capability, we randomly partitioned the 2,000
observations of the dataset into a training set of 200 observations
and a test set of 1,800 observations. Over 50 independent trials, the
average ARI value on the test sets reached $0.892$ with a standard
error of $0.0095$. This performance exceeds all competing clustering
methods listed in Table~\ref{tab:simul4result}. Also notice that using
fewer observations in training tends to produce a simpler, smoother
density estimate that typically has fewer minor modes yet preserving
major ones, thus further reducing computational cost.  Consequently,
this generalization capability provides a significant advantage,
dramatically reducing computational costs for large-scale datasets.

\subsection{Olive oil data}
\label{sec:simul5}


The olive oil data \citep{forina-armanino-etal-1983} consists of $572$
observations with $8$ numeric variables measuring a series of chemical
features and $2$ categorical variables indicating the the areas of the
olive oils.  Here, we use the categorical variable \texttt{macro.area}
for group membership, which has $3$ unique values (i.e., South,
Sardinia and Centre.North). Each clustering method is performed on all
$8$ numerical variables and their results at the $3$-cluster level are
then compared.

\begin{table}[!tbh]
  \centering
  \begin{tabular}{l|ccc|ccc|ccc|cccc} \hline
    Region & \multicolumn{3}{c|}{K-means} & \multicolumn{3}{c|}{SpecClust} & \multicolumn{3}{c|}{KK-means} & \multicolumn{4}{c}{DBSCAN(5)} \\	
    & 1 & 2 & 3 & 1 & 2 & 3 & 1 & 2 & 3 & 1 & 2 & 3 & 0 \\ \hline
    \textit{South} & 219 & 102 & 2       & 317 & 0 & 6  & 200 & 113 & 10 & 306 & 0 & 0 & 17  \\
    \textit{Sardinia} & 0 & 98 & 0       & 0 & 98 & 0   & 0 & 0 & 98     & 0 & 97 & 0 & 1    \\
    \textit{Centre.North} & 0 & 30 & 121 & 0 & 23 & 128 & 0 & 0 & 151    & 0 & 0 & 147 & 4   \\ \hline
    \multicolumn1l{ARI} & \multicolumn{3}{|c}{$0.448$} & \multicolumn{3}{|c}{$0.901$} & \multicolumn{3}{|c}{$0.448$} & \multicolumn{4}{|c}{ $0.924$} \\
    \hline \hline
    Region & \multicolumn{3}{c|}{PdfCluster} & \multicolumn{3}{c|}{HMAC} & \multicolumn{3}{c|}{MDE-MF} & \multicolumn{3}{c}{MCM} \\
           & 1 & 2 & 3 & 1 & 2 & 3 & 1 & 2 & 3 & 1 & 2 & 3 \\ \hline
    \textit{South}  & 294 & 0 & 29 & 323 & 0 & 0 & 323 & 0 & 0 & 323 & 0 & 0 \\
    \textit{Sardinia}  & 0 & 98 & 0 & 0 & 98 & 0 & 0 & 98 & 0 & 0 & 98 & 0  \\
    \textit{Centre.North}  & 0& 5 & 146 & 0 & 97 & 54 & 0 & 0 & 151 & 0 & 0 & 151 \\ \hline
    \multicolumn1l{ARI} & \multicolumn{3}{|c}{$0.822$} & \multicolumn{3}{|c}{$0.816$} & \multicolumn{3}{|c}{$\mathbf{1.000}$} & \multicolumn{3}{|c}{$\mathbf{1.000}$} \\
    \hline
  \end{tabular}
  \caption{Results of the clustering methods at the $3$-cluster level
    for olive oil data. Default argument settings are used for
    SpecClust and KK-means. Out of all feasible values,
    \texttt{minPts=5} gives the largest mean ARI value for DBSCAN.}
  \label{tab:simul5result}
\end{table}

Table~\ref{tab:simul5result} shows the results of the clustering
methods at the $3$-cluster level.  We can see that both MCM and MDE-MF
divide observations into $3$ clusters that match perfectly with the
true partition. PdfCluster and HMAC incorrectly allocate a significant
proportion of observations. The K-means and KK-means methods have the
worst performance for this dataset, perhaps due to the non-spherical
irregular shapes of the clusters.  DBSCAN, with \texttt{minPts} set
optimally to 5, gives its best performance. While it provides almost a
perfect match with the region labels, it excludes 22 observations as
noise. The average performance of SpecClust is pretty good, too.

\begin{figure}[!tbh]  
  \centering
  \includegraphics[width=0.9\linewidth]{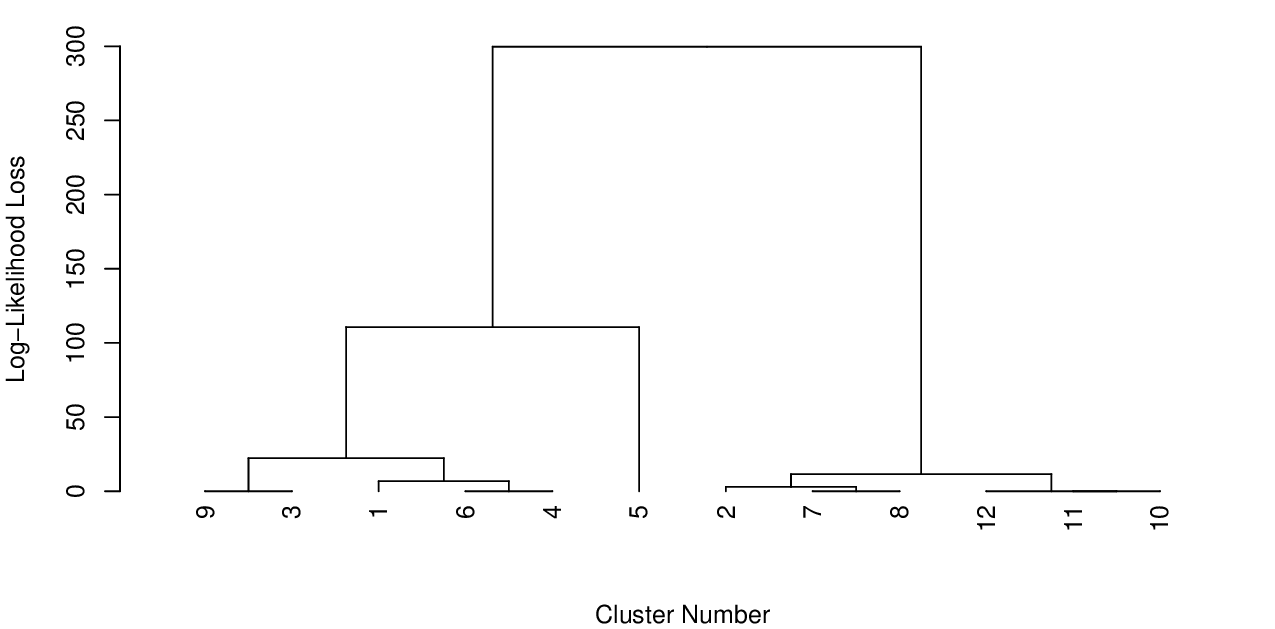}
  \caption{Dendrogram of the log-likelihood loss at each cluster-merging step
    for olive oil data}
  \label{fig:simu5_dendrogram}
\end{figure}

The dendrogram in Fig.~\ref{fig:simu5_dendrogram} suggests that
$3$-cluster is an appropriate partition, as the log-likelihood loss is
large when the number of clusters reduces from $3$ to $2$.

\section{Concluding Remarks}    
\label{sec:conclusion} 

In the above, we propose a new modal clustering method abbreviated as
MCM. It builds a hierarchical clustering structure by merging a
cluster into another until there remains only one cluster that covers
the entire feature space.  The fundamental idea is to assess the
relative significance of a cluster and then merge the least
significant cluster into one of its adjacent neighbours. It is made
feasible by exploring the properties of a Morse function that are
established in this paper for clustering purposes. In particular, one
only needs to consider the critical points of Morse index $d-1$ to
study the adjacency of clusters and assess the relative significance
of a cluster. Unlike a level-set-based method, MCM merges a cluster
into another regardless of the ``level'' of the cluster, but based on
its significance assessment using the Kullback-Leibler divergence or
log-likelihood loss. It truncates the relative bump of a cluster
without refilling the excess mass elsewhere and thus avoids density
distortion. The performance of MCM is numerically studied and compared
with several other clustering methods using both simulated and
real-world data, with very satisfactory results obtained.

Table~\ref{tab:running-time} presents the running times of our MCM
implementation, evaluated on one dataset from each of the five
scenarios studied in Sections~\ref{sec:simul1}-\ref{sec:simul5}, on a
laptop with a CPU clock speed of 4.42 GHz. The four stages of the
computation listed in the table correspond to, respectively, density
estimation (Section~\ref{sec:density-estimation}), mode identification
\& observation association (Section~\ref{sec:mode-identification}),
finding critical points in $M_{d-1}$ \& establishing cluster adjacency
(Sections~\ref{sec:critical-points} and \ref{sec:adjacency}), and
cluster merging (Sections~\ref{sec:method} and
\ref{sec:clustersignificance}). In particular, the modified NR method
keeps running and searching for new critical points in $M_{d-1}$ until
all of the following three conditions are satisfied: (a) reaching at
least 200 runs; (b) all modes mutually connected via the discovered
critical points in $M_{d-1}$; (c) having done as many runs in addition
since the last critical point is found. 

\begin{table}[!tbh]
  \centering
  \begin{tabular}{c|cccc|c} \hline Dataset & Density & Mode Id &
    $M_{d-1}$ & Merging & Total \\ \hline
    Quadrimodal & 2.54 & 0.84 & 0.64 & 0.00 & 4.01 \\
    Three circles & 4.23 & 0.21 & 1.98 & 0.71 & 7.13  \\
    T4.8k & 67.91 & 10.05 & 16.19  & 2.67 & 96.82 \\
    Wireless & 5.56  & 0.62 & 6.47 & 0.06 & 12.71 \\
    Olive oil & 5.78 & 0.25 & 3.30 & 0.05 & 9.39 \\ \hline
  \end{tabular}
  \caption{Running times (in seconds) for one dataset from each of the 5
    scenarios studied in Sections~\ref{sec:simul1}-\ref{sec:simul5}}
  \label{tab:running-time}
\end{table}

Several issues are worth discussing. First, we note that the MCM
method, as described in Section~\ref{sec:method}, can be implemented
using almost any nonparametric density estimator.  We use a
semiparametric-mixture-based density estimator owing to its better
performance than the KDE and its faster computation in a
high-dimensional space. Second, the method offers a complete hierarchy
of how initial clusters are merged in steps. This gives a user the
opportunity to choose the clustering result at a desired number of
clusters. The likelihood loss provides some evidence about the
appropriate number of clusters if one only wants to make statistically
insignificant clusters disappear. An alternative way to determine a
proper number of clusters is to conduct a significance test on the
initial modal clusters; see, e.g., \cite{burman-polonik-2009},
\cite{genovese-peronepacifico-etal-2016} and the references therein.
It may also be possible to resort to bootstrapping or cross-validation
to identify the largest number of statistically significant
clusters. Third, we use the MEM to identify modes for its nice
properties, while the gradient flow lines should really be determined
by the steepest ascent method with an infinitesimal step
size. However, in practice it does not seem to make much
difference. Fig.~\ref{fig:modalem} shows the modal clusters determined
by using the MEM iterative formula (\ref{eqn:mem}) using a very small
step size. One can hardly tell any difference between
Fig.~\ref{fig:modalem}(a) and Fig.~\ref{fig:bound}(a). The blowup in
Fig.~\ref{fig:modalem}(b) shows some slight differences, but only in
very low density areas where observations are scarce. The two
boundaries generated by the MEM are shifted a little downwards, the
strip between the two boundaries generated by the MEM is a bit wider
and the boundaries are not exactly orthogonal to the contour lines,
whereas the actual boundaries corresponding to flowlines must
be. Fourth, it may be practically advantageous to include critical
points of lower Morse indices. Allowing regions with any number of
positive eigenvalues result in finding these lower-index points
naturally while searching for points in $M_{d-1}$. They can serve as a
safeguard if some critical points in $M_{d-1}$ are undetected, as the
density heights at the points in $M_{d-2}$ are typically not too much
lower than their relevant undetected points in $M_{d-1}$.

\begin{figure}[!tbh]
  \centering
  \includegraphics[width=0.9\linewidth]{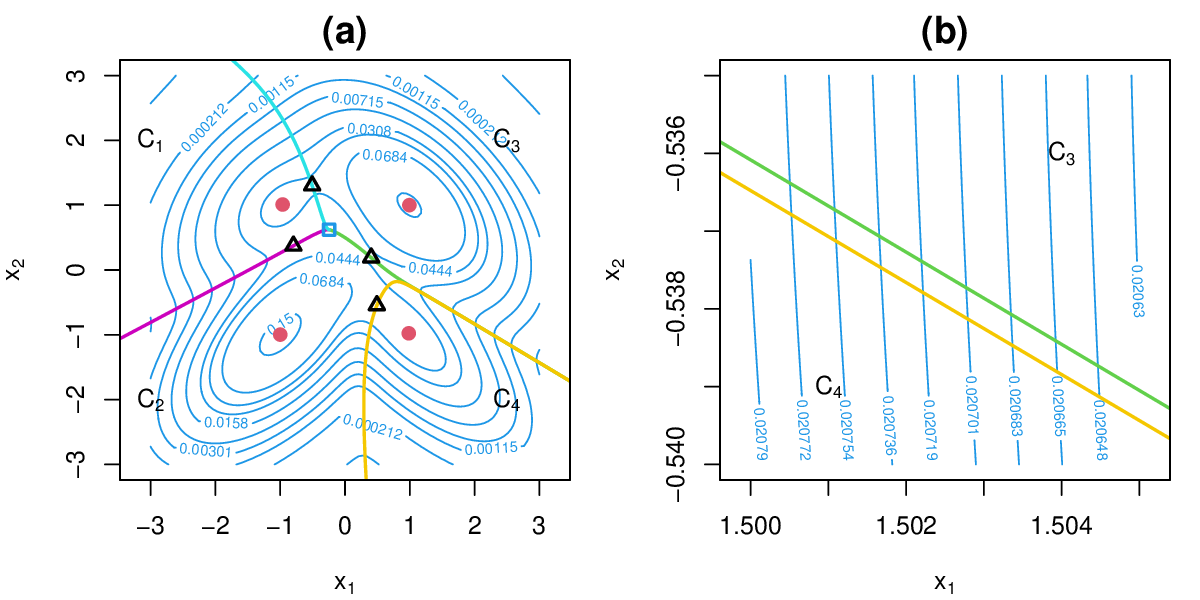}
  \caption{(a) Modal clustering based on the MEM algorithm; (b) A
    blowup of a small area}
  \label{fig:modalem}
\end{figure}

Despite its demonstrated success, the proposed method has certain
limitations. First, it is inapplicable when density estimation is
inappropriate, such as with discrete data. Second, while highly
effective at identifying major clusters, the method becomes
computationally impractical when the density estimate features too
many modes. Finally, our current implementation based on the
Newton-Raphson method is infeasible in very high-dimensional settings;
in such cases, integrating dimensionality reduction techniques like
Principal Component Analysis (PCA) can be considered.

Although this work focuses on densities with support $\Real^d$, the
proposed method can be extended to other types of support. Crucially,
the core methodology remains unchanged if the support of $f$ is a
manifold without boundary, such as the surface of a hypersphere in
directional data analysis. However, extending the framework to
manifolds with boundaries presents a greater challenge, as the maximum
density point on a boundary is not necessarily a critical
point. Adapting and implementing the method for these scenarios
remains an important direction for future research.

\small

\bibliographystyle{chicago}

\end{document}